\documentclass[11pt,aps,longbibliography,superscriptaddress]{revtex4-2}

\usepackage{epsfig,dsfont,amssymb,amsmath,amsthm,amsfonts,amsbsy,mathrsfs}
\usepackage{graphicx, color}
\usepackage{epstopdf}
\usepackage{mathdots}
\usepackage{float}
\usepackage{xcolor}
\usepackage{chemfig}

\usepackage{booktabs}

\usepackage{tikz}
\usetikzlibrary{quantikz2}

\usepackage{subcaption}
\usepackage[T1]{fontenc}
\usepackage{bm} 
\usepackage{esint}
\usepackage{mathrsfs}
\usepackage{booktabs}
\usepackage{multirow}
\usepackage{makecell}
\usepackage{verbatim}
\usepackage[
    colorlinks=true,
    linkcolor=blue,
    urlcolor=blue,
    citecolor=blue
]{hyperref}

\definecolor{revisionbrown}{RGB}{150,75,0}

\usepackage{appendix}
\usepackage{times}{\Large}

\newcommand{\ketbra}[2]{| #1 \rangle\!\langle #2 |}

\newtheorem{theorem}{Theorem}
\newtheorem{lemma}{Lemma}
\newtheorem{Proposition}{Proposition}

\newcommand{\Lie}{\mathrm{Lie}}
\newcommand{\ad}{\operatorname{ad}}
\newcommand{\im}{\operatorname{im}}

\makeatother

\begin{document}


\title{A Dynamical Lie-Algebraic Framework for Hamiltonian Engineering and Quantum Control}


\author{Yanying Liang}
\affiliation{College of Mathematics and Informatics, South China Agricultural University, Guangzhou, 510640, China}
\affiliation{Institute of High Performance Computing (IHPC), Agency for Science, Technology and Research (A*STAR), 1 Fusionopolis Way, \#16-16 Connexis, Singapore 138632, Singapore}

\author{Ruibin Xu}
\affiliation{College of Mathematics, South China University of Technology, Guangzhou, 510640, China}

\author{Mao-Sheng Li}
\affiliation{College of Mathematics, South China University of Technology, Guangzhou, 510640, China}

\author{Haozhen Situ}
\email{situhaozhen@gmail.com}
\affiliation{College of Mathematics and Informatics, South China Agricultural University, Guangzhou, 510640, China}

\author{Zhu-Jun Zheng}
\affiliation{College of Mathematics, South China University of Technology, Guangzhou, 510640, China}
\affiliation{Laboratory of Quantum Science and Engineering, Guangzhou, 510640, China}

\date{\today}

\begin{abstract}
Determining the unitary dynamics accessible from finite Hamiltonian resources
is a central problem in Hamiltonian engineering and quantum control.
Dynamical Lie algebras (DLAs) connect available control Hamiltonians with the
reachable dynamics, but their use as a design tool for modifying Hamiltonian
generator sets remains less developed. In this work, we develop a
finite-dimensional DLA framework for three generator-set operations:
composition, invariance, and reduction. For composition, we construct direct
sums of component DLAs using spectral projectors on an auxiliary register. For
invariance, we analyze when modifications of Pauli-string generating sets
preserve the generated Lie algebra, and introduce algebraic diagnostics for
added generators. For reduction, we consider compact reductive DLAs and show
how projection onto selected simple ideals gives reduced generating sets whose
Lie closures are the corresponding ideal sums. We illustrate these results
with finite-dimensional examples and numerical checks, including direct-sum
dimension addition, central-spin invariance diagnostics, and DLA-based ansatz
reduction for block-local Hamiltonians. The results show how DLA structure can
be used to diagnose controllability and guide Hamiltonian generator design
under explicit algebraic assumptions.

\end{abstract}

\maketitle


\section{Introduction}
Engineering the unitary dynamics generated by controllable Hamiltonians is a central problem in quantum computation, Hamiltonian simulation, and quantum control theory \cite{kallush2022controlling,jin2025universal,ramakrishna1996relation,schirmer2002identification,d2021introduction}. Given a finite set of accessible Hamiltonians, the central question is to determine which unitary transformations can be generated through their evolution and in what ways these dynamics may be optimized for special tasks. This question is fundamental to the design of resource-efficient quantum
algorithms in quantum computing
\cite{wu2026quantifying,ye2026coherence}, and also crucial for the
understanding of controllable quantum systems in quantum mechanics
\cite{ramakrishna1996relation,gago2023determining}.

A mathematical description of the achievable dynamics is provided by the
dynamical Lie algebra (DLA)
\cite{d2021introduction,d2025controllability}, denoted as
\(\mathfrak{g}_\mathcal{A}\). It is formed by the nested commutators and
real linear combinations of its generator set \(\mathcal{A}\).
The structure of \(\mathfrak{g}_\mathcal{A}\) determines the reachable set of
unitary evolutions and therefore governs the expressivity, controllability,
and symmetry properties of Hamiltonian-driven quantum dynamics
\cite{ragone2024lie,fontana2024characterizing,mcclean2018barren,larocca2025barren,wiebe2014hamiltonian,DavidEdwardBruschi2025}.
As a result, DLAs have the potential to form a unifying framework for
analyzing a wide range of quantum processes.

Recent research has made substantial progress in characterizing DLAs associated with specific Hamiltonian structures, including Pauli-string generators \cite{aguilar2024full}, labeled directed graphs \cite{DavidEdwardBruschi2026}, and specialized variational circuits such as the Quantum Approximate Optimization Algorithm \cite{allcock2026dynamical,kokcu2024classification,kazi2025analyzing,rabinovich2025role}.
Despite these advances, there is still a limited understanding of how algebraic modifications to the generator set $\mathcal{A}$ fundamentally impact the resulting DLA $\mathfrak{g}_\mathcal{A}$. Understanding this relationship is essential for systematically shaping DLA structures to achieve desirable controllability, symmetry, and expressive properties.

Recent research has begun to touch on this issue. Allcock et al. \cite{allcock2025generating} proposed qubit- and parameter-efficient constructions of DLAs, while a related study \cite{zimboras2015symmetry} highlighted the importance of symmetry and algebraic structure for DLA simulability. More recently, Smith et al. \cite{smith2025optimally} demonstrated that for generating sets $\mathcal{A}$ consisting solely of Pauli strings, the minimal number of operators required to generate $\mathfrak{su}(2^N)$ is \(2N + 1\). These findings suggest the potential of strategically modifying generator sets to shape DLA structures for targeted controllability. Despite these advances, three related questions need to be considered, as summarized in Fig.~\ref{Figure-3question} and discussed in Ref.~\cite{allcock2025generating}:
\begin{itemize}
    \item \textbf{Question 1}: Given two generator sets $\mathcal{A}$ and $\mathcal{B}$ generating $\mathfrak{g}_\mathcal{A}$ and $\mathfrak{g}_\mathcal{B}$, what generator set $\mathcal{C}$ yields $\mathfrak{g}_\mathcal{C} \cong \mathfrak{g}_\mathcal{A} \oplus \mathfrak{g}_\mathcal{B}$?
    \item \textbf{Question 2}: Under what modifications $\mathcal{A} \to \mathcal{A}'$ does the DLA remain unchanged, i.e., $\mathfrak{g}_\mathcal{A} = \mathfrak{g}_{\mathcal{A}'}$?
    \item \textbf{Question 3}: For a specified Lie structure \(s \subseteq \mathfrak{g}_\mathcal{A}\), under what conditions can \(\mathcal{A}\) be modified to \(\mathcal{A}'\) such that \(\mathfrak{g}_{\mathcal{A}'} \cong s\)?
\end{itemize}

\begin{figure}[!t]
\centering
\includegraphics[width=13cm]{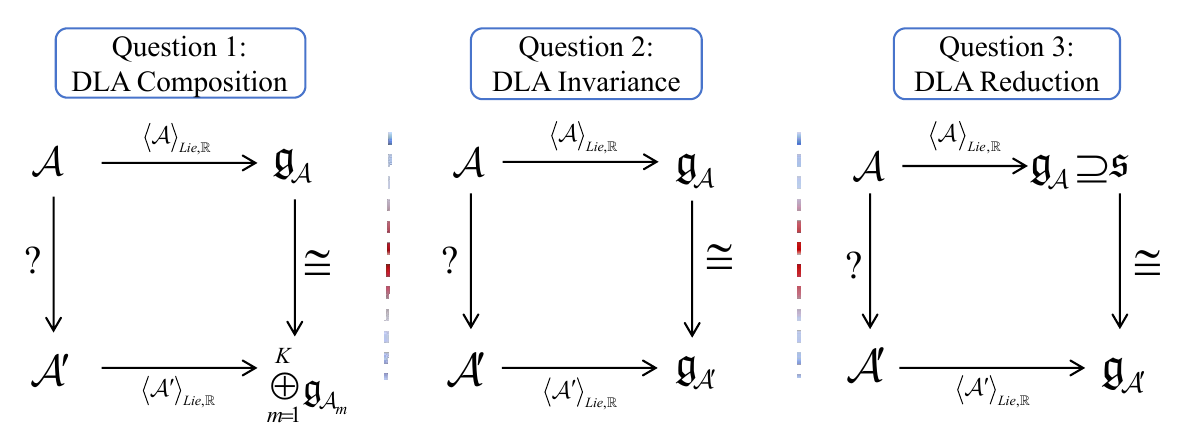}
\caption{ Illustration of three questions on modifying generating set $\mathcal{A}$ of a DLA $\mathfrak{g}_\mathcal{A}$: DLA composition, DLA invariance, and DLA reduction. }
\label{Figure-3question}
\end{figure} 

Question 1 addresses how a larger DLA can be constructed as a direct sum of
smaller, task-specific DLAs. 
A naive approach is to place each generator set \(\mathcal{A}_m\)
\((m=1,2,\dots,K)\) on an independent \(n\)-qubit register, which would require
\(nK\) qubits to realize
\(\mathfrak{g}_{\mathcal{A}'} \cong
\bigoplus_{m=1}^K \mathfrak{g}_{\mathcal{A}_m}\).
This may become impractical for resource-constrained quantum computing.
Here, resource constrained refers to the limited size of the available
system register. In this sense, reducing the number of required system qubits is an important consideration for implementing multiple dynamical components on near-term quantum devices.
Question 1 thus seeks efficient methods to realize the direct sum of
different DLAs, which can be applied in parallel representations of
Hamiltonian-driven dynamics, whether time dependent or time independent, on small noisy
intermediate-scale quantum computers
\cite{barratt2021parallel,diaz2025parallel}.

Question 2 concerns the context of quantum circuit optimization \cite{nam2018automated,iten2022exact}. Specifically, it investigates transformations $\mathcal{A} \to \mathcal{A}'$ that preserve the DLA unchanged. Let $|\mathcal{A}|$ denote the cardinality of the generator set $\mathcal{A}$. If \(|\mathcal{A}| = |\mathcal{A}'|\), the circuit corresponding to \(\mathcal{A}'\) realizes the same algebraic structure with identical cardinality, but possibly enhances qubit efficiency or reduces circuit complexity through reorganization of the generator set $\mathcal{A}$, especially when $\mathcal{A}$ consists of Pauli strings \cite{viswanathan2025}.
If \(|\mathcal{A}| <|\mathcal{A}'|\), the transformation can help simulate one Hamiltonian interaction set by another, 
instead of the brute-force calculation of nested commutators \cite{zimboras2015symmetry}.

Question 3 aims to systematically restrict a quantum system's accessible
dynamics to a lower-dimensional Lie structure inside
\(\mathfrak{g}_\mathcal{A}\). 
Addressing Question 3 can enable the precise engineering of dynamical
evolutions in decoherence-free subspaces, where error generators are linked to the generators of Lie algebras
\cite{PhysRevLett.81.2594,xu2012nonadiabatic}. Moreover, inspired by
Ref.~\cite{bosse2025efficient}, Question 3 can also motivate
resource-efficient approximations of large quantum systems by representing
the relevant dynamics within low-dimensional, well-structured algebraic
components. Thus, when the full system suffers from a barren plateau, such
reduced algebraic structures may provide a pathway to compare and design more
trainable circuit ansatz families under explicit Lie-algebraic assumptions
\cite{ragone2024lie,fontana2024characterizing}.

Motivated by the significance and potential applications of these three
questions, we answer Questions 1 and 3, and offer a specialized analysis of
\(\mathfrak{su}(2^N)\) and the DLAs with
\(|\mathcal{A}|<|\mathcal{A}'|\) in Question 2. We do not attempt to solve
these problems in full generality; rather, we provide constructive results in
physically relevant finite-dimensional settings. In Question 1,
we use the spectral decomposition of a Hermitian operator to link different
DLAs while ensuring linear independence among the \(K\) direct-sum components,
leading to a register-efficient direct-sum construction. For Question 2, when
\(|\mathcal{A}|=|\mathcal{A}'|\), we present a new Pauli-string generating set
for \(\mathfrak{su}(2^N)\), which is relevant to circuit compilation; when
\(|\mathcal{A}|<|\mathcal{A}'|\), we give two quantitative
algebraic diagnostics that address a limitation in
Ref.~\cite{zimboras2015symmetry}. In Question 3, we work in
a compact reductive setting and construct reduced generating sets by
projecting onto selected simple ideal components, with finite-generator
examples illustrating when filtered constructions succeed or fail.

The structure of this manuscript is outlined as follows. Section~2 provides the concepts from Lie theory and related studies. Section~3 presents theoretical results addressing these three questions. Section 4 gives the related applications and numerical studies for our findings.
In Section~5, we summarize these results for three questions and discuss potential future work.

\section{Preliminary}
\label{sec:preliminary}

\subsection{Background}

\textbf{The controllability of a quantum system.}
Consider an \(N\)-qubit quantum system with Hilbert space
\(\mathcal H=(\mathbb C^2)^{\otimes N}\).
Let \(\rho=\ketbra{\psi}{\psi}\) be the
pure-state density operator corresponding to
\(|\psi\rangle\in\mathcal H\). The evolution of \(|\psi\rangle\) is
determined by the Schr\"odinger equation
\cite{diaz2025parallel,smith2025optimally}
\begin{equation}\label{sch-equ}
i\hbar\frac{d}{dt}|\psi\rangle=H|\psi\rangle,
\end{equation}
where we set Planck's constant \(\hbar=1\). The solution of
Eq.~\eqref{sch-equ} is well known to be
\begin{equation}\label{evolution-1H}
|\psi_1\rangle=U(t)|\psi_0\rangle,\quad U(t)=e^{-iHt},
\end{equation}
for some initial state \(|\psi_0\rangle\).
Consider a set of controllable Hamiltonians \(\{H_i\}_{i=1}^L\), which
correspond to different basic gates in quantum computing. Since each
\(H_i\) is time independent, the evolution under a single \(H_i\) is a
unitary transformation. A piecewise-constant control sequence gives
\begin{equation}\label{evolution-many-H}
|\psi_1\rangle=e^{-iH_Lt_L}\cdots e^{-iH_1t_1}|\psi_0\rangle,
\end{equation}
for real parameters \(t_1,\cdots,t_L\).
More general time-dependent controls may be treated through
piecewise-constant approximations. Here Eq.~\eqref{evolution-many-H} is
used only to identify the available Hamiltonian directions considered
below.

\begin{figure}[htbp]
\centering
\includegraphics[width=13cm]{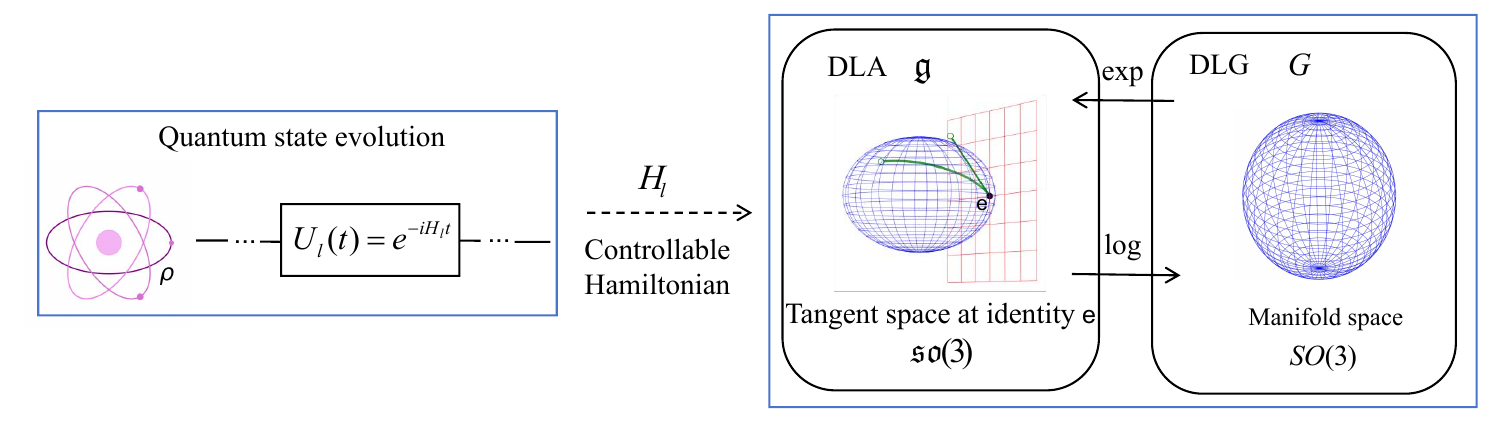}
\caption{Lie-theoretic background. A density operator evolves as
\(\rho(t)=U(t)\rho(0)U^\dagger(t)\), with accessible unitaries
\(U(t)\in G\). The DLA \(\mathfrak g=\operatorname{Lie}(G)\) is the
tangent space of \(G\) at the identity, and the exponential map connects
\(\mathfrak g\) to \(G\).}
\label{Figure-lietheory}
\end{figure} 

\textbf{Controllability and DLAs.}
Collecting the associated anti-Hermitian generators described above, we
define
\begin{equation}\label{generatingset}
\mathcal A=\{A_1,A_2,\cdots,A_L\}.
\end{equation}
Throughout the finite-dimensional parts of this work, unless otherwise
stated, all DLAs are real Lie subalgebras of \(\mathfrak{su}(2^N)\) for an \(N\)-qubit system. 
The \textit{dynamical Lie algebra} (DLA) generated by \(\mathcal A\) is
\cite{d2021introduction}
\begin{equation}
\mathfrak g_{\mathcal A}
=
\langle\mathcal A\rangle_{\mathrm{Lie},\mathbb R},
\label{orginialdla}
\end{equation}
which is the real vector space spanned by all possible nested commutators
of elements of \(\mathcal A\). The set \(\mathcal A\) in
Eq.~\eqref{generatingset} is called a \textit{dynamical generating set}
for \(\mathfrak g_{\mathcal A}\).
Signs and nonzero scalar factors generated by commutators do not affect
the DLA, since the Lie closure is taken over the real span. Throughout this paper, \(=\) denotes equality as represented subalgebras
of a fixed matrix algebra, whereas \(\cong\) denotes abstract Lie-algebra
isomorphism.

\textbf{DLA and reachable Lie group.}
Following the standard formulation of finite-dimensional quantum control
systems, the DLA \(\mathfrak g_{\mathcal A}\) determines the connected Lie
subgroup \(G_{\mathcal A}\subseteq SU(2^N)\) on which the controlled
dynamics evolves \cite{schirmer2002identification}. We write
\(\operatorname{Lie}(G_{\mathcal A})=\mathfrak g_{\mathcal A}.\)
We refer to \(G_{\mathcal A}\) as the associated dynamical Lie group
(DLG). The actually reachable subset may depend on the
allowed control amplitudes and time schedules.
As a smooth manifold,
\(\dim(G_{\mathcal A})=\dim(\mathfrak g_{\mathcal A})\).

The significance of DLAs lies in their ability to characterize the
ultimate controllability of quantum state evolution. The accessible unitaries generated by the available controls satisfy
\(U(t)\in G_{\mathcal A}\) for suitable control parameters \cite{ragone2024lie}.
Fig.~\ref{Figure-lietheory} provides a schematic illustration of the
Lie-theoretic background in this paper.

\subsection{Related works}

Modifying the generating sets of DLAs can provide a powerful way to construct new DLAs efficiently. Several recent studies have explored this direction from the following perspectives.

\begin{lemma} [\cite{allcock2025generating}]\label{lemma1}
 Consider an initial generating set $\mathcal{A}=\{A_1,\cdots,A_L\}$. Let $\chi$ be a Hermitian operator having $K$ distinct eigenvalues. Define $\mathcal{A}' = \{A_i \otimes \chi^j | i \in \{1,2,\cdots,L\}, j \in \{0,1,\cdots, K-1\} \}$. Then we have $\mathfrak{g}_{\mathcal{A}'} \cong \bigoplus_{j=1}^{K} \mathfrak{g}_{\mathcal{A}}$.   
\end{lemma}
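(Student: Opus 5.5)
The plan is to trade the powers $\chi^j$ for the spectral projectors of $\chi$. Once the generators are rewritten in terms of these projectors, the Lie closure splits into $K$ mutually commuting blocks, and each block is a copy of $\mathfrak g_{\mathcal A}$.

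\textbf{Step 1: spectral form of the generators.} Since $\chi$ is Hermitian with $K$ distinct real eigenvalues $\lambda_1,\dots,\lambda_K$, I will write its spectral decomposition $\chi=\sum_{k=1}^K\lambda_k P_k$. The $P_k$ are nonzero orthogonal projectors with $P_kP_l=\delta_{kl}P_k$ and $\sum_k P_k=I$. This gives
\begin{equation*}
A_i\otimes\chi^j=\sum_{k=1}^K\lambda_k^j\,(A_i\otimes P_k),\qquad j=0,\dots,K-1 .
\end{equation*}
The coefficient matrix $V=(\lambda_k^j)_{j,k}$ is a real Vandermonde matrix. Because the $\lambda_k$ are distinct, $V$ is invertible over $\mathbb R$. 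Hence, for each fixed $i$,
\begin{equation*}
\mathrm{span}_{\mathbb R}\{A_i\otimes\chi^j\}_{j=0}^{K-1}=\mathrm{span}_{\mathbb R}\{A_i\otimes P_k\}_{k=1}^K .
\end{equation*}
The Lie closure of a set depends only on its real span, so $\mathfrak g_{\mathcal A'}=\langle\{A_i\otimes P_k\}_{i,k}\rangle_{\mathrm{Lie},\mathbb R}$.

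\textbf{Step 2: computing the closure.} The key identity is
\begin{equation*}
[X\otimes P_k,\,Y\otimes P_l]=XY\otimes P_kP_l-YX\otimes P_lP_k=\delta_{kl}\,[X,Y]\otimes P_k .
\end{equation*}
Set $\mathfrak h=\bigoplus_{k=1}^K \mathfrak g_{\mathcal A}\otimes P_k$, where $\mathfrak g_{\mathcal A}\otimes P_k$ denotes $\{X\otimes P_k : X\in\mathfrak g_{\mathcal A}\}$.
\begin{itemize}
\item \emph{$\mathfrak h$ is a Lie algebra.} By the identity above, $\mathfrak h$ is closed under brackets. It also contains every generator $A_i\otimes P_k$. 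Hence $\mathfrak g_{\mathcal A'}\subseteq\mathfrak h$.
\item \emph{Reverse inclusion.} Fix $k$ and take nested commutators of the generators $A_{i}\otimes P_k$ with that same $k$. By the identity, these are exactly the elements $(\text{nested commutator of the } A_i)\otimes P_k$. Taking real spans gives $\mathfrak g_{\mathcal A}\otimes P_k\subseteq\mathfrak g_{\mathcal A'}$ for every $k$, and therefore $\mathfrak h\subseteq\mathfrak g_{\mathcal A'}$.
\end{itemize}

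\textbf{Step 3: identifying the direct sum.} The sum defining $\mathfrak h$ is direct as a vector space. Indeed, if $\sum_k X_k\otimes P_k=0$, multiplying on the right by $I\otimes P_l$ gives $X_l\otimes P_l=0$, so $X_l=0$ because $P_l\neq 0$. The summands commute pairwise by the $\delta_{kl}$ factor, so this is also a Lie-algebra direct sum. Finally, each map $X\mapsto X\otimes P_k$ is an injective Lie homomorphism (injective again since $P_k\ne0$). Therefore $\mathfrak g_{\mathcal A}\otimes P_k\cong\mathfrak g_{\mathcal A}$, and $\mathfrak g_{\mathcal A'}\cong\bigoplus_{k=1}^K\mathfrak g_{\mathcal A}$.

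\textbf{Main obstacle.} The only step needing care is Step 1. One must check that the change of basis from $\{\chi^j\}$ to $\{P_k\}$ involves only \emph{real} coefficients, so that real Lie closures are preserved. This holds because the eigenvalues of a Hermitian $\chi$ are real, and it is precisely where the hypothesis of exactly $K$ distinct eigenvalues, matched with the $K$ powers $j=0,\dots,K-1$, is used.

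A minor remark: $A_i\otimes P_k$ need not be traceless. The statement should therefore be read as an isomorphism of real Lie algebras, not necessarily as subalgebras of $\mathfrak{su}$ of the enlarged space.
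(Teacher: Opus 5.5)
Your proof is correct and follows essentially the route the paper indicates (it cites the lemma rather than proving it): the Vandermonde change of basis from $\{\chi^j\}_{j=0}^{K-1}$ to the spectral projectors, in the remark following Theorem~\ref{composition1}, and then the projector-sector argument of Appendix~\ref{app:proof-composition} with every component set equal to $\mathcal A$; your explicit two-sided inclusion and your directness check via multiplication by $I\otimes P_l$ are, if anything, cleaner than that appendix. The only slip is your closing remark: since $\operatorname{Tr}(A_i\otimes P_k)=\operatorname{Tr}(A_i)\operatorname{Tr}(P_k)$, the generators $A_i\otimes P_k$ and $A_i\otimes\chi^j$ are traceless whenever the $A_i$ are, so under the paper's convention $\mathcal A\subset\mathfrak{su}(2^N)$ the generated algebra does lie in $\mathfrak{su}$ of the enlarged space.
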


Lemma \ref{lemma1} shows that coupling the generating set $\mathcal{A}$ with a Hermitian operator $\chi$ having $K$ distinct eigenvalues yields a new generating set $\mathcal{A}'$, whose DLA is a direct sum of $K$ identical copies. This approach is qubit-efficient, utilizing only \( \lceil \log K \rceil \) additional qubits.
However, regarding Question 1, this method is restricted in a direct sum of identical DLAs.  Consequently, it is insufficient for flexible quantum circuit design that may require different components.

\begin{lemma}[\cite{smith2025optimally}]\label{lemma3}
 Fix \(2 \leq k \leq N \) and consider \(\mathcal{A} \subset i P_{k}^{*}\) and \(\mathcal{B} \subset i P_{N-k}^{*}\) such that \(\langle\mathcal A\rangle_{Lie,\mathbb{R}} = \mathfrak{su}(2^{k})\) and \(\mathcal{B}\) is product universal for \(i P_{N-k}^{*}\). Let \(\mathbb{B} := \{(U_{B}, B)\}_{B \in \mathcal{B}}\) be a set of pairs where each \(U_{B} \in P_{k}^{*}\). Defining \(\mathcal{A}' = \{A \otimes I^{\otimes N-k} \mid A \in \mathcal{A}\}\) and \(\mathcal{B}' = \{U_{B} \otimes B \mid (U_{B}, B) \in \mathbb{B}\}\), we have that \(\langle\mathcal{A}' \cup \mathcal{B}'\rangle_{Lie,\mathbb{R}} = \mathfrak{su}(2^{N})\). $P_{k}^{*}$ denotes the set of Pauli strings for $k$ qubits, excluding the identity operator $I^{\otimes k}$.
\end{lemma}

Lemma \ref{lemma3} implies that the minimum number of elements in a Pauli-string generating set for the full $\mathfrak{su}(2^N)$ is $2N+1$. If the number of elements in the generating set attains this minimal size, one can
decompose the original generating set into a small-sized universal subset $\mathcal{A}$ and a product-universal subset for the remaining part $\mathcal{B}$, and then form new generating sets  \(\mathcal{A}'\)  and $\mathcal{B}'$ through tensor product extension with the identity matrix and non-identity Pauli strings. After modification, \(|\mathcal{A}'| = |\mathcal{A}| + |\mathcal{B}|\), where the size remains \(2N+1\) without changing. Via decomposition and extension, Lemma \ref{lemma3} ensures no increase in $|\mathcal{A}'|$ and provides a way to modify the generating set for $\mathfrak{su}(2^N)$, which is related to Question 2. Based on this lemma, we will present a novel construction for the generating set of $\mathfrak{su}(2^N)$ with \(2N+1\) generators.

\begin{lemma}[\cite{zimboras2015symmetry}]\label{lemma2}
 Let $\mathcal{P}=\{iH_1,\cdots,iH_p\}$ and $\mathcal{Q}=\{iH_{p+1},\cdots,iH_q\}$ be two sets of anti-Hermitian operators. $\mathcal{P}$ can simulate $\mathcal{Q}$, 
meaning that \(\langle \mathcal{P} \rangle_{Lie,\mathbb{R}} = \langle \mathcal{P} \cup \mathcal{Q} \rangle_{Lie,\mathbb{R}}\) if and only if 
(1) \( \dim[\mathcal{P}^{(2)}] = \dim[(\mathcal{P} \cup \mathcal{Q})^{(2)}] \) and 
(2) the central projections of $\mathcal{P}$ and $\mathcal{P} \cup \mathcal{Q}$ onto $\mathcal{C}$ are of the same rank, here $(\mathcal{P} )'$ represents the commutant of $(\mathcal{P} )$, 
$\mathcal{C}$ is the center of the commutant $(\mathcal{P} \cup \mathcal{Q})'$, and  \(P^{(2)} = (P^{\otimes 2})' = \{ S \in \mathbb{C}^{d^2 \times d^2} \mid [S, M \otimes I_d + I_d \otimes M] = 0, \forall  M \in P \}\),$d=2^N$ in $N$-qubit system.   
\end{lemma}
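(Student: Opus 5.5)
The plan is to work entirely inside the matrix algebra $M_d(\mathbb C)$, $d=2^N$, and use the double-commutant picture. The real Lie algebra $\mathfrak g=\langle\mathcal P\rangle_{\mathrm{Lie},\mathbb R}\subseteq\mathfrak u(d)$ is compact, so it is reductive: $\mathfrak g=\mathfrak z(\mathfrak g)\oplus[\mathfrak g,\mathfrak g]$. Both claimed invariants are attached to the same $\mathfrak g$ and $\mathfrak g'=\langle\mathcal P\cup\mathcal Q\rangle_{\mathrm{Lie},\mathbb R}\supseteq\mathfrak g$, and I would compare them piece by piece.

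\textbf{Necessity.} First note that the commutant of a set of operators equals the commutant of the real Lie algebra it generates. If $[S,M]=0$ and $[S,M']=0$, then $S$ commutes with $[M,M']$ and with real combinations, so commutation is inherited by the whole Lie closure. The same holds for $M\otimes I+I\otimes M$, because $M\mapsto M\otimes I+I\otimes M$ is a Lie homomorphism. Hence $\mathcal P^{(2)}=(\mathfrak g^{\otimes2})'$ and $(\mathcal P\cup\mathcal Q)^{(2)}=(\mathfrak g'^{\otimes2})'$. Thus $\mathfrak g=\mathfrak g'$ forces condition (1). Likewise the commutants coincide, so the centers $\mathcal C$ coincide, and the central projections of $\mathfrak g$ and $\mathfrak g'$ onto $\mathcal C$ are the same real span; this gives condition (2).

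\textbf{Sufficiency.} Assume (1) and (2). Since $\mathfrak g\subseteq\mathfrak g'$, we have $(\mathfrak g'^{\otimes2})'\subseteq(\mathfrak g^{\otimes2})'$, and equal dimension gives equality of the quadratic commutants. The core step, and the main obstacle, is to show that equal quadratic commutants force $[\mathfrak g,\mathfrak g]=[\mathfrak g',\mathfrak g']$.

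\begin{itemize}
\item I would decompose $\mathbb C^d$ into isotypic components under the semisimple part $\mathfrak k'=[\mathfrak g',\mathfrak g']$, and restrict $\mathfrak k=[\mathfrak g,\mathfrak g]\subseteq\mathfrak k'$.
\item The second-order commutant of a representation records three pieces of data: which irreducible summands are equivalent or dual to each other, and, via $V\otimes V\supseteq\mathbb C$ and $V\otimes V^*\supseteq\mathfrak{k}$-adjoint copies, the decomposition of the adjoint action.
\item Concretely, $\mathrm{End}(\mathbb C^d)\cong\mathbb C^d\otimes\mathbb C^{d*}$ sits inside the tensor square after a partial transpose. Equality of quadratic commutants then implies that $\mathfrak k$ and $\mathfrak k'$ have the same commutant on $\mathrm{End}(\mathbb C^d)$ under the adjoint action.
\item In particular, $\mathfrak k$ and $\mathfrak k'$ have the same invariant subspaces there. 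Since $\mathfrak k'\subseteq \mathrm{End}$ is $\mathfrak k'$-invariant, it is also $\mathfrak k$-invariant.
\item I then want to conclude $\mathfrak k=\mathfrak k'$, using the standard fact (Zimborás et al.) that a semisimple subalgebra with the same quadratic symmetries as $\mathfrak k'$ must equal it. This is proven by comparing decompositions of $\mathfrak k'$ into simple ideals together with a Dynkin-type classification of maximal subalgebras.
\end{itemize}

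If that classification step is too heavy, I would at least verify it via the tensor-square invariant $\dim(\mathfrak k^{\otimes 2})'$ for irreducible blocks. Here $\mathfrak k\subsetneq\mathfrak k'$ gives strictly more invariants, except for known exceptional embeddings such as $G_2\subset\mathfrak{so}(7)$, which must be handled by the unitary-group argument on each block.

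\textbf{Conclusion.} Finally, the centers: $\mathfrak g'=\mathfrak z'\oplus\mathfrak k'$, with $\mathfrak z'$ lying in the center of the commutant's commutant, i.e.\ in $i\mathcal C$ restricted to the block structure. The projection of $\mathfrak g$ onto $\mathcal C$ is contained in that of $\mathfrak g'$, and condition (2) gives equal ranks, so the two projections coincide. Combined with $\mathfrak k=\mathfrak k'$, this yields $\mathfrak g=\mathfrak g'$.
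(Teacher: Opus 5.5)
The paper gives no proof of this lemma; it is quoted from \cite{zimboras2015symmetry}, so your argument has to stand on its own. Your necessity direction and overall plan are sound: quadratic symmetries should fix the derived algebra, and the central-projection rank should fix the centre. The sufficiency direction, however, has a genuine gap exactly at the step you call the core. You never prove that (1) forces $[\mathfrak g,\mathfrak g]=[\mathfrak g',\mathfrak g']$. Invoking ``the standard fact (Zimbor\'as et al.)'' is circular, because that is the substance of the lemma. The sketch around it also fails:
\begin{itemize}
\item The bullet asserting that (1) gives $\mathfrak k$ and $\mathfrak k'$ the same commutant on $\mathrm{End}(\mathbb C^d)$ is unjustified. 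Condition (1) gives this for $\mathfrak g$ and $\mathfrak g'$. Central elements are not scalars, so they act non-trivially under $\mathrm{ad}$ and cannot simply be dropped.
\item The next bullet uses invariance in the vacuous direction: $\mathfrak k'$ is $\mathfrak k$-invariant merely because $\mathfrak k\subseteq\mathfrak k'$.
\item The fallback is factually wrong. $\mathfrak g_2\subset\mathfrak{so}(7)$ is an exception only for \emph{linear} symmetries. On the tensor square, $7\otimes 7=1\oplus 21\oplus 27$ under $\mathfrak{so}(7)$ but $1\oplus7\oplus14\oplus27$ under $\mathfrak g_2$. If it were an exception for quadratic symmetries, the lemma itself would be false: realize both algebras on $\mathbb C^8=\mathbb C^7\oplus\mathbb C$, where every central projection vanishes. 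No blockwise unitary argument could rescue that, and no classification of maximal subalgebras is needed.
\end{itemize}

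The missing step has a short, classification-free proof along the line you started.
\begin{itemize}
\item One checks that $[S,M\otimes I+I\otimes M]^{T_2}=[S^{T_2},M\otimes I-I\otimes M^{T}]$. Under $X\mapsto\sum_{ij}X_{ij}|i\rangle|j\rangle$, the operator $M\otimes I-I\otimes M^{T}$ is $\mathrm{ad}_M$ on $M_d(\mathbb C)$. So (1), together with $\mathfrak g\subseteq\mathfrak g'$, says that $\mathrm{ad}\,\mathfrak g$ and $\mathrm{ad}\,\mathfrak g'$ have the same commutant on $M_d(\mathbb C)$.
\item These maps are skew-adjoint for the Hilbert--Schmidt product, so a subspace is invariant iff its orthogonal projector lies in the commutant.
\item Now use invariance in the useful direction. $\mathfrak g+i\mathfrak g$ is $\mathrm{ad}\,\mathfrak g$-invariant, hence $\mathrm{ad}\,\mathfrak g'$-invariant. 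Intersecting with $\mathfrak u(d)$ shows that $\mathfrak g$ is an ideal of $\mathfrak g'$.
\item Let $\mathfrak m$ be the orthogonal complement of $\mathfrak g$ in $\mathfrak g'$. It is an ideal with $[\mathfrak g,\mathfrak m]=0$. For $Y\in\mathfrak m$, the line $\mathbb CY$ is annihilated by $\mathrm{ad}\,\mathfrak g$, hence is $\mathrm{ad}\,\mathfrak g'$-invariant. Then $\langle Y,[X,Y]\rangle_{\mathrm{HS}}=0$ forces $[X,Y]=0$.
\item Hence $\mathfrak g'=\mathfrak g\oplus\mathfrak m$ with $\mathfrak m$ central, which gives $[\mathfrak g,\mathfrak g]=[\mathfrak g',\mathfrak g']$.
\end{itemize}

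To finish, make explicit a fact that both of your directions use silently: $\Pi_{\mathcal C}$ annihilates commutators. Indeed $\operatorname{Tr}(c^\dagger[A,B])=\operatorname{Tr}([c^\dagger,A]B)=0$ for $c\in\mathcal C\subseteq(\mathfrak g')'=(\mathcal P\cup\mathcal Q)'$. So the rank of the projected \emph{generators} equals $\dim\Pi_{\mathcal C}$ of the Lie closure. Your necessity argument needs this too, since (2) concerns generator sets rather than Lie closures. Also $\mathfrak z(\mathfrak g)\subseteq\mathfrak z(\mathfrak g')\subseteq\mathcal C$. The two ranks are therefore $\dim\mathfrak z(\mathfrak g)$ and $\dim\mathfrak z(\mathfrak g)+\dim\mathfrak m$, and (2) gives $\mathfrak m=0$.
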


Lemma~\ref{lemma2} gives an algebraic criterion for deciding when two anti-Hermitian generating sets
\(\mathcal A=\mathcal P\) and \(\mathcal A'=\mathcal P\cup\mathcal Q\) generate the same DLA. 
It therefore provides a useful reference criterion for exact DLA invariance in Question~2, in the case where the generating set is enlarged. 
In the following discussion, we use this exact criterion as a baseline and then introduce diagnostic quantities for cases where exact invariance is not satisfied in Lemma~\ref{lemma2}.

Regarding Question 3, to the best of our knowledge, no constructions of generating sets have been proposed for DLA reduction. This paper aims to address this gap in the compact finite-dimensional setting.

\section{Main results}
 By investigating the relationship between generator selection and the resulting algebraic structure, we provide detailed answers to the three questions outlined in Fig.~\ref{Figure-3question}. 

\subsection{DLA composition via direct sum construction}

We begin with Question 1 investigating the composition of DLAs. Starting with the binary case of merging $\mathfrak{g}_\mathcal{A}$ and $\mathfrak{g}_\mathcal{B}$, we extend this to the construction of a DLA isomorphic to the direct sum of 
$K$ distinct DLAs $\mathfrak{g}_1, \mathfrak{g}_2, \ldots, \mathfrak{g}_K$.
Let \(\mathcal H_S=(\mathbb C^2)^{\otimes N}\) denote the Hilbert space of an \(N\)-qubit system, and let \(\mathcal H_A\) denote an auxiliary Hilbert space.

\begin{theorem}\label{composition1}
 For each
\(m=1,2,\ldots,K\), let
\(\mathcal A_m
    =
    \{A_{m,1},A_{m,2},\ldots,A_{m,L_m}\}
    \subset \mathfrak{su}(\mathcal H_S)\)
be a dynamical generating set acting on the same system Hilbert
space \(\mathcal H_S\), and let
\(\mathfrak g_{\mathcal A_m}
    =
    \langle \mathcal A_m\rangle_{\mathrm{Lie},\mathbb R}.\)
Suppose that
\(\chi\) is a Hermitian operator on \(\mathcal H_A\) with
\(K\) distinct eigenvalues \(\lambda_1,\lambda_2,\ldots,\lambda_K\),
and denote by \(\Pi_1,\Pi_2,\ldots,\Pi_K\) the corresponding orthogonal
spectral projectors. These projectors satisfy
\(\Pi_m\Pi_n=\delta_{mn}\Pi_m,
    \sum_{m=1}^K \Pi_m=I,\)
where \(I\) denotes the identity operator on \(\mathcal H_A\).
Define
\(\mathcal A'
    =
    \left\{
        A_{m,l}\otimes\Pi_m:
        m=1,\ldots,K,\;
        l=1,\ldots,L_m
    \right\}
    \subset \mathfrak{su}(\mathcal H_S\otimes\mathcal H_A).\)
Then
\[
    \mathfrak g_{\mathcal A'}
    \cong
    \bigoplus_{m=1}^K \mathfrak g_{\mathcal A_m}.
\]
\end{theorem}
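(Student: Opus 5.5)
The plan is to show that the map \(\Phi:\bigoplus_{m=1}^K \mathfrak g_{\mathcal A_m}\to \mathfrak{u}(\mathcal H_S\otimes\mathcal H_A)\),
\[
\Phi(X_1,\ldots,X_K)=\sum_{m=1}^K X_m\otimes\Pi_m ,
\]
is an injective Lie-algebra homomorphism whose image is exactly \(\mathfrak g_{\mathcal A'}\). Then \(\mathfrak g_{\mathcal A'}\cong\bigoplus_m\mathfrak g_{\mathcal A_m}\) follows immediately. (Each \(A_{m,l}\otimes\Pi_m\) is anti-Hermitian and traceless, because \(\operatorname{tr}A_{m,l}=0\); so everything lies in \(\mathfrak{su}\), as the statement asserts.)

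\textbf{Step 1: homomorphism.} Using \(\Pi_m\Pi_n=\delta_{mn}\Pi_m\), we get
\[
[X\otimes\Pi_m,Y\otimes\Pi_n]=XY\otimes\Pi_m\Pi_n-YX\otimes\Pi_n\Pi_m=\delta_{mn}[X,Y]\otimes\Pi_m .
\]
Extending by real bilinearity shows \(\Phi([x,y])=[\Phi(x),\Phi(y)]\) for the componentwise bracket on the direct sum.

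\textbf{Step 2: injectivity.} Suppose \(\sum_m X_m\otimes\Pi_m=0\). Multiplying on the right by \(I\otimes\Pi_n\) gives \(X_n\otimes\Pi_n=0\). Since \(\Pi_n\neq0\) (each \(\lambda_n\) is an actual eigenvalue), this forces \(X_n=0\). This is the point where the orthogonality of the projectors ensures the \(K\) components stay linearly independent.

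\textbf{Step 3: image equals \(\mathfrak g_{\mathcal A'}\).} For the inclusion \(\mathfrak g_{\mathcal A'}\subseteq\operatorname{im}\Phi\), note that every generator \(A_{m,l}\otimes\Pi_m\) equals \(\Phi\) of the element with \(A_{m,l}\) in slot \(m\) and zeros elsewhere. The image of a homomorphism is a Lie subalgebra, so it contains the Lie closure of the generators.

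For the reverse inclusion, I will show by induction on nesting depth that for every nested commutator \(C\) of elements of \(\mathcal A_m\), the element \(C\otimes\Pi_m\) lies in \(\mathfrak g_{\mathcal A'}\). The base case is the generators themselves. For the inductive step, the identity of Step 1 gives \([C\otimes\Pi_m,A_{m,l}\otimes\Pi_m]=[C,A_{m,l}]\otimes\Pi_m\). Since nested commutators of the form \([\cdots[A_{i_1},A_{i_2}],\ldots,A_{i_r}]\) span \(\mathfrak g_{\mathcal A_m}\), real linearity gives \(\mathfrak g_{\mathcal A_m}\otimes\Pi_m\subseteq\mathfrak g_{\mathcal A'}\) for each \(m\). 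Summing over \(m\) then yields \(\operatorname{im}\Phi\subseteq\mathfrak g_{\mathcal A'}\).

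\textbf{Main obstacle.} The only subtle point is the reverse inclusion in Step 3. Generators from different sectors must not mix, and each sector must independently reproduce its full DLA without any cross-terms. The orthogonality relation \(\Pi_m\Pi_n=\delta_{mn}\Pi_m\) handles this cleanly: cross-sector brackets vanish, and same-sector brackets reproduce the original bracket tensored with \(\Pi_m\).

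I would also note a contrast with Lemma~\ref{lemma1}. There, one must use powers \(\chi^j\) and invert a Vandermonde system to isolate the sectors. Here the projectors \(\Pi_m\) are used directly, which removes that step and allows the components to be distinct.
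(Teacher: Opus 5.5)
Your proposal is correct and follows essentially the same route as the paper. The maps $X\mapsto X\otimes\Pi_m$ are Lie homomorphisms because $\Pi_m^2=\Pi_m$, and cross-sector brackets vanish because $\Pi_m\Pi_n=0$ for $m\neq n$; packaging these into the single map $\Phi$, proving injectivity via right multiplication by $I\otimes\Pi_n$ with $\Pi_n\neq 0$, and identifying $\operatorname{im}\Phi$ with $\mathfrak g_{\mathcal A'}$ by two inclusions makes explicit the bijectivity and independence steps that the paper asserts without detailed justification.
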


Theorem~\ref{composition1} shows that a direct-sum DLA can be constructed
from component DLAs represented on a common system Hilbert space. The proof
is given in Appendix~\ref{app:proof-composition}. 
The construction should be interpreted at the level of represented
generator sets. All component generators \(A_{m,l}\) act on the same system
Hilbert space \(\mathcal H_S\), while the projectors \(\Pi_m\) act on the
auxiliary Hilbert space \(\mathcal H_A\). If the component DLAs are
originally represented on Hilbert spaces of different dimensions, fixed
faithful embeddings into a common matrix algebra should be chosen before
applying Theorem~\ref{composition1}.

Comparing Lemma~\ref{lemma1} with Theorem~\ref{composition1}, the two
constructions use, respectively,
\begin{align}\label{A'inlemma1}
 \mathcal{A}' =
 \{A_i \otimes \chi^j \mid
 i \in \{1,2,\cdots,L\},\,
 j \in \{0,1,\cdots,K-1\} \},
\end{align}
which employs powers of a single Hermitian operator \(\chi\), and
\begin{align}\label{A'inthm1}
   \mathcal{A}' =
   \left\{
        A_{m,l}\otimes\Pi_m:
        m=1,\ldots,K,\;
        l=1,\ldots,L_m
   \right\},
\end{align}
which instead uses the orthogonal spectral projectors \(\Pi_m\) associated
with the eigenspaces of \(\chi\) as auxiliary labels. 
In Eq.~\eqref{A'inthm1}, the index \(m\) labels the component DLA, while
\(l\) labels the generators inside the \(m\)-th component set
\(\mathcal A_m\).

For a Hermitian operator \(\chi\) with \(K\) distinct eigenvalues, the set
of powers \(\{\chi^0,\chi^1,\ldots,\chi^{K-1}\}\) is linearly independent
and spans the same space as the spectral projectors
\(\{\Pi_1,\Pi_2,\ldots,\Pi_K\}\). Indeed, since
\[
    \chi^j=\sum_{i=1}^K \lambda_i^j\Pi_i,
    \quad j\geq 0,
\]
the Vandermonde matrix associated with the distinct eigenvalues
\(\lambda_1,\ldots,\lambda_K\) is invertible. Hence
\[
    \operatorname{span}\{\chi^0,\ldots,\chi^{K-1}\}
    =
    \operatorname{span}\{\Pi_1,\ldots,\Pi_K\}.
\]
This explains why the projector-based construction uses the auxiliary
spectral labels directly.

The eigenvalues \(\lambda_1,\lambda_2,\ldots,\lambda_K\) are required to
be distinct so that they define \(K\) different spectral projectors.
However, the corresponding eigenspaces need not be one-dimensional. For
example, if
\(\Pi_m=|0\rangle\langle 0|+|1\rangle\langle 1|,\)
then \(A_{m,l}\otimes\Pi_m\) applies the same system generator
\(A_{m,l}\) on both auxiliary states \(|0\rangle\) and \(|1\rangle\).
Therefore, a degenerate
eigenspace gives one copy of \(\mathfrak g_{\mathcal A_m}\) with
multiplicity, not several independently controllable copies. In this
sense, the construction is parallel across different projectors
\(\Pi_m\), but not within the degeneracy of a single projector.
Equivalently, if \(\mathcal H_m\) denotes the eigenspace associated with
\(\Pi_m\), then \(\mathcal H_S\otimes\mathcal H_m\) is the corresponding
invariant branch subspace. On this subspace,
\(A_{m,l}\otimes\Pi_m\) acts as \(A_{m,l}\) on the system register and
identically on the auxiliary degeneracy space. Thus no Cartan-type
decomposition is assumed in this construction; only the spectral
decomposition of the auxiliary operator \(\chi\) is used.

When comparing with a naive allocation,
if \(K\) component systems are implemented on \(K\) independent
\(n\)-qubit system registers, the system-register allocation uses \(nK\)
qubits. By contrast, when all component DLAs are represented on a common
\(2^n\)-dimensional system space, the block-projector construction uses
one \(n\)-qubit system register together with an auxiliary label register.
This comparison concerns the qubit count for Hilbert-space allocation;
it is not a claim about Hamiltonian synthesis cost, control-term
count, locality, or platform-specific implementation.

As for the generator count, the displayed construction uses
\(|\mathcal A'|
    =
    \sum_{m=1}^K L_m\)
controlled generators.
 A smaller generating set may exist if some component algebra
admits a smaller generating set or if additional representation-dependent
relations are available. 

Finally, the direct-sum structure follows from
\(\Pi_m\Pi_n=\delta_{mn}\Pi_m\). Indeed, cross-branch commutators vanish
for \(m\neq n\), while for a fixed branch,
\[
    [A_{m,l}\otimes\Pi_m,A_{m,k}\otimes\Pi_m]
    =
    [A_{m,l},A_{m,k}]\otimes\Pi_m .
\]
Therefore, different auxiliary branches commute and each branch closes
under the Lie brackets of its own system generators. This yields the
direct-sum decomposition of the generated DLA.

\subsection{DLA invariance with generator replacement}

Let us now turn to Question 2: \textit{Under what conditions does a
modification \(\mathcal{A}\to\mathcal{A}'\) leave the DLA unchanged?}
Let \(\mathfrak{g}_{\mathcal A}\subseteq\mathfrak{su}(2^N)\) be the DLA
generated by a set of anti-Hermitian matrices
\(\mathcal{A}=\{A_1,\dots,A_L\}\) with cardinality \(L\).
Let \(\mathcal{A}'=\{A'_1,\dots,A'_{L'}\}\) be another finite set of
anti-Hermitian matrices with cardinality \(L'\). To systematically analyze
DLA invariance, we distinguish the cases based on the relationship between
\(L\) and \(L'\).

Following the convention stated in Sec.~\ref{sec:preliminary}, this
subsection works within the traceless anti-Hermitian setting
\(\mathfrak{su}(2^N)\).
In this subsection, we call a DLA modification invariant  when
\[
    \langle\mathcal A\rangle_{\mathrm{Lie},\mathbb R}
    =
    \langle\mathcal A'\rangle_{\mathrm{Lie},\mathbb R}.
\]
The exact criteria used below are those stated in Lemma~\ref{lemma3} and
Lemma~\ref{lemma2}.

\subsubsection{Generating set of Pauli strings with cardinality unchanged}

When \(L=L'\), a special case we choose is the full DLA
\(\mathfrak{su}(2^N)\). Lemma~\ref{lemma3} tells us that we can modify the
generating set \(\mathcal{A}\to\mathcal{A}'\) while preserving
\(\mathfrak{su}(2^N)\) without increasing the cardinality. Let
\(\mathcal{A}\) be a minimal universal generating set for
\(\mathfrak{su}(2^N)\) with \(L=2N+1\), and
\(\mathcal{A}\subset i\mathcal P_N^*\). Here, \(\mathcal P_N^*\) denotes
all Pauli strings on \(N\) qubits excluding the \(N\)-fold tensor product
of identity matrices \(I^{\otimes N}\). We can decompose \(\mathcal{A}\)
into \(\mathcal{A}_k\subset i\mathcal P_k^*\) with \(k\) qubits and
\(\mathcal{B}\subset i\mathcal P_{N-k}^*\) with \(N-k\) qubits for
\(2\leq k<N\). Define the modified set
\[\mathcal{A}'=\mathcal{A}_k'\cup\mathcal{B}',\] where
\(\mathcal{A}_k'
    =
    \{A\otimes I^{\otimes (N-k)}\mid A\in\mathcal{A}_k\},\)
extending \(\mathcal{A}_k\) to \(N\) qubits via identity tensor product,
and
\(\mathcal{B}'
    =
    \{U_B\otimes B\mid B\in\mathcal{B}\},\)
pairing each \(B\in\mathcal{B}\) with a Pauli string
\(U_B\in\mathcal P_k^*\). Then using Lemma~\ref{lemma3}, one gets
\[
    \langle\mathcal{A}\rangle_{\mathrm{Lie},\mathbb R}
    =
    \langle\mathcal{A}'\rangle_{\mathrm{Lie},\mathbb R}
    =
    \mathfrak{su}(2^N)
\]
with \(L=L'=2N+1\). In the following, we present a novel modification of
the generating set for \(\mathfrak{su}(2^N)\). 
For an \(N\)-qubit system, \(\sigma_j^\alpha\) denotes the \(N\)-qubit
Pauli operator acting as \(\sigma^\alpha\) on qubit \(j\) and as the
identity on all other qubits. For example, in a two-qubit system,
\(\sigma_1^x=\sigma^x\otimes I\) and
\(\sigma_2^x=I\otimes\sigma^x\). Products such as
\(\sigma_i^x\sigma_{i+1}^y\) are shorthand for tensor products with
identities on all remaining sites. Thus all operators appearing in
\(\mathcal A_2\) below are \(4\times4\) anti-Hermitian matrices.
 The first case is \(N=2\),
whose proof is in Appendix~\ref{proof-of-invariance-prop}.

\begin{Proposition}\label{prop:su2-case}
Let
\(\mathcal{A}_2
    =
    \{i\sigma^x_1,\ i\sigma^y_1,\ i\sigma^x_2,\ i\sigma^y_2,\ 
    i\sigma^y_1\sigma^y_2\},\)
then we have
\[
    \langle\mathcal{A}_2\rangle_{\mathrm{Lie},\mathbb R}
    =
    \mathfrak{su}(4).
\]
\end{Proposition}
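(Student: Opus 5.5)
Since $\mathfrak{su}(4)$ has real dimension $15$ and is spanned by $\{iP : P\in\mathcal P_2^*\}$, the plan is to show that every one of the fifteen non-identity two-qubit Pauli strings (times $i$) lies in $\langle\mathcal A_2\rangle_{\mathrm{Lie},\mathbb R}$. The inclusion $\langle\mathcal A_2\rangle_{\mathrm{Lie},\mathbb R}\subseteq\mathfrak{su}(4)$ is immediate because all generators are traceless and anti-Hermitian. Throughout I will use the Pauli rule: for Pauli strings $P,Q$, $[iP,iQ]$ is $0$ if $P,Q$ commute and $\pm 2i\,PQ$ (with $PQ$ again a Pauli string up to a phase $\pm i$) if they anticommute. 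Signs and factors are irrelevant in the real span, so at each step I only need to track which Pauli string is produced.

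\textbf{Step 1: local generators.} First I close the single-qubit parts. From $[i\sigma^x_1,i\sigma^y_1]\propto i\sigma^z_1$ and likewise $[i\sigma^x_2,i\sigma^y_2]\propto i\sigma^z_2$, the DLA contains $i\sigma^\alpha_j$ for all $\alpha\in\{x,y,z\}$ and $j=1,2$. This yields the six local directions, i.e.\ $\mathfrak{su}(2)\oplus\mathfrak{su}(2)$.

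\textbf{Step 2: generate all $3\times3$ coupling terms.} Next I use the single entangling generator $i\sigma^y_1\sigma^y_2$. Bracketing with $i\sigma^x_1$ gives $\propto i\sigma^z_1\sigma^y_2$, since $\sigma^x\sigma^y\propto\sigma^z$. Bracketing with $i\sigma^z_1$ gives $\propto i\sigma^x_1\sigma^y_2$. So $i\sigma^\alpha_1\sigma^y_2$ is obtained for all $\alpha$. Applying $i\sigma^x_2$ and $i\sigma^z_2$ to each of these rotates the second factor, giving $i\sigma^\alpha_1\sigma^\beta_2$ for all $\alpha,\beta\in\{x,y,z\}$. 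That is nine more elements, bringing the total to $6+9=15$ linearly independent Pauli directions, which span $\mathfrak{su}(4)$.

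\textbf{Main obstacle.} The only real content is Step 2: checking that each bracket is nonzero, i.e.\ that the relevant Pauli strings anticommute. This holds because they differ by exactly one site with distinct non-identity Paulis. Conceptually, the adjoint action of the local $\mathfrak{su}(2)\oplus\mathfrak{su}(2)$ on the $9$-dimensional coupling space $\mathfrak{su}(2)\otimes\mathfrak{su}(2)$ is the irreducible representation $\mathbf 3\otimes\mathbf 3$. Hence any nonzero element, such as $i\sigma^y_1\sigma^y_2$, generates the whole coupling space, and this gives a clean alternative justification. For a quick sanity check, the paper's numerical approach of computing the rank of iterated commutators can confirm $\dim=15$.
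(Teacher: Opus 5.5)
Your proof is correct and follows the paper's argument essentially step for step: you close the local $\mathfrak{su}(2)\oplus\mathfrak{su}(2)$, bracket with qubit-1 Paulis to get $i\sigma^\alpha_1\sigma^y_2$ for all $\alpha$, and then rotate the second factor with the qubit-2 generators to get all nine coupling terms. The only differences are cosmetic: you reach $i\sigma^x_1\sigma^y_2$ by bracketing directly with $i\sigma^z_1$ rather than through the paper's nested bracket $[i\sigma^y_1,[i\sigma^x_1,i\sigma^y_1\sigma^y_2]]$, and your irreducibility remark ($\mathbf 3\otimes\mathbf 3$) is a valid extra justification that the paper does not use.
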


Based on Proposition~\ref{prop:su2-case}, define
\begin{align}\label{b1}
\mathcal{B}_I'
=
\left\{
    i\sigma^x_2
    \left(\prod_{2<j<i}\sigma^z_j\right)
    \sigma^y_i,\;
    i\sigma^y_2
    \left(\prod_{2<j<i}\sigma^z_j\right)
    \sigma^x_i
    \;\middle|\;
    i=3,\dots,N
\right\}.
\end{align}
One can check that \(\mathcal{A}_2\) and \(\mathcal{B}_I'\) satisfy the
conditions in Lemma~\ref{lemma3}. Thus, set
\(\mathcal{A}=\mathcal{A}_2'\cup\mathcal{B}_I'\), with
\begin{align}\label{A2}
\mathcal{A}_2'
=
\{A\otimes I^{\otimes (N-2)}\mid A\in\mathcal{A}_2\}.
\end{align}
Then we have
\[
    \langle\mathcal{A}\rangle_{\mathrm{Lie},\mathbb R}
    =
    \mathfrak{su}(2^N).
\]
Next we show a modification of \(\mathcal{A}\) to \(\mathcal{A}'\) with
\(L=L'=2N+1\).

\begin{theorem}
\label{su2N-unchanged}
For the full DLA \(\mathfrak{su}(2^N)\) and \(\mathcal{A}_2'\) in
Eq.~\eqref{A2}, define the modified set
\(\mathcal{A}'=\mathcal{A}_2'\cup\mathcal{B}_{II}'\), where
\begin{align}\label{b2}
\mathcal{B}_{II}'
=
\left\{
    i\sigma^x_i\sigma^y_{i+1},\;
    i\sigma^y_i\sigma^x_{i+1}
    \;\middle|\;
    i=2,\dots,N-1
\right\},
\end{align}
with \(\sigma^x_i\sigma^y_{i+1}\) and
\(\sigma^y_i\sigma^x_{i+1}\) nearest-neighbor interactions. Then for \(L=L'=2N+1\), we have
\[
    \langle\mathcal{A}'\rangle_{\mathrm{Lie},\mathbb R}
    =
    \mathfrak{su}(2^N).
\]
\end{theorem}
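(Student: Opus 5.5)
The plan is to argue by induction on \(N\), using Lemma~\ref{lemma3} at each step with a one-qubit complement, so that \(\mathcal{B}_I'\) never has to be generated explicitly. For \(n\geq 2\), let
\[
\mathcal{A}'_{(n)}=\mathcal{A}_2'\cup\{i\sigma^x_i\sigma^y_{i+1},\ i\sigma^y_i\sigma^x_{i+1}\mid i=2,\dots,n-1\},
\]
regarded as a set of Pauli strings on the first \(n\) qubits. The claim is that \(\langle\mathcal{A}'_{(n)}\rangle_{\mathrm{Lie},\mathbb R}=\mathfrak{su}(2^n)\). For \(n=2\) the second set is empty, and the claim is exactly Proposition~\ref{prop:su2-case}. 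The count also works: \(|\mathcal{A}'_{(n)}|=5+2(n-2)=2n+1\), so at \(n=N\) we get \(L'=2N+1\).

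For the inductive step, assume the claim for some \(n\) with \(2\leq n<N\). Apply Lemma~\ref{lemma3} with total size \(n+1\) and \(k=n\), so \(N-k=1\) and \(k\geq 2\) as the lemma requires.
\begin{itemize}
\item \emph{First block.} Take \(\mathcal{A}=\mathcal{A}'_{(n)}\subset iP_n^*\). It consists of Pauli strings on \(n\) qubits and generates \(\mathfrak{su}(2^n)\) by the inductive hypothesis.
\item \emph{Second block.} Take \(\mathcal{B}=\{i\sigma^y,\ i\sigma^x\}\subset iP_1^*\).
\item \emph{Pairing.} Use \(U_{i\sigma^y}=\sigma^x_n\) and \(U_{i\sigma^x}=\sigma^y_n\), both of which lie in \(P_n^*\).
\end{itemize}
The lemma then outputs two families. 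The set \(\mathcal{A}'\) of the lemma is \(\mathcal{A}'_{(n)}\otimes I\). The set \(\mathcal{B}'\) is \(\{i\sigma^x_n\sigma^y_{n+1},\ i\sigma^y_n\sigma^x_{n+1}\}\). Their union is precisely \(\mathcal{A}'_{(n+1)}\), so the lemma gives \(\langle\mathcal{A}'_{(n+1)}\rangle_{\mathrm{Lie},\mathbb R}=\mathfrak{su}(2^{n+1})\). Taking \(n+1=N\) proves the theorem.

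The step I expect to need the most care is checking that \(\mathcal{B}=\{i\sigma^x,i\sigma^y\}\) is \emph{product universal} for \(iP_1^*\) in the sense of Ref.~\cite{smith2025optimally}. I would verify this directly. The products of \(\sigma^x\) and \(\sigma^y\) give \(\sigma^x\), \(\sigma^y\) and \(\sigma^x\sigma^y=i\sigma^z\), so all of \(P_1^*\) is reached up to phases. Equivalently, the Lie closure of \(\{i\sigma^x,i\sigma^y\}\) is \(\mathfrak{su}(2)\), which has only the three Pauli directions. This is the single-qubit analogue of how the same lemma is applied to \(\mathcal{B}_I'\) in the text. If the paper's intended definition is stricter, I would fall back on a direct commutator argument. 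The induction hypothesis puts every \(iP\otimes I\) with \(P\in P_n^*\) in the algebra, so in particular \(i\sigma^z_n\otimes I\) is available. Then \([i\sigma^z_n,\,i\sigma^x_n\sigma^y_{n+1}]\propto i\sigma^y_n\sigma^y_{n+1}\), and further brackets with single-site Paulis on qubit \(n\) produce \(iQ\otimes\sigma^y_{n+1}\) for every \(Q\in P_n\), and similarly \(iQ\otimes\sigma^x_{n+1}\). Commutators of these yield \(iQ\otimes\sigma^z_{n+1}\), including \(Q=I\), and hence all Pauli strings on \(n+1\) qubits.
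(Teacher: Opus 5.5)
Your induction is correct, and it uses Lemma~\ref{lemma3} differently from the paper.

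The paper applies the lemma only once, with $k=2$. It pairs the universal core $\mathcal{A}_2$ of Proposition~\ref{prop:su2-case} with the nonlocal set $\mathcal{B}_I'$ of Eq.~\eqref{b1}, whose second factors are Jordan--Wigner-type strings $\sigma^z\cdots\sigma^z\sigma^{x/y}$ on qubits $3,\dots,N$. These strings are asserted to be product universal for $iP_{N-2}^*$, so $\mathcal{A}_2'\cup\mathcal{B}_I'$ generates $\mathfrak{su}(2^N)$. The paper then shows $\mathcal{B}_I'\subset\langle\mathcal{B}_{II}'\rangle_{\mathrm{Lie},\mathbb R}$ by nested commutators of neighbouring terms, e.g.\ $[i\sigma^x_2\sigma^y_3,i\sigma^x_3\sigma^y_4]\propto i\sigma^x_2\sigma^z_3\sigma^y_4$, which build up the $\sigma^z$-strings. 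Monotonicity of the Lie closure then gives the result.

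You instead apply the lemma $N-2$ times with $k=n$ and $N-k=1$, adding one qubit per step. This is legitimate: the lemma only requires $k\ge 2$ and allows any $U_B\in P_k^*$. Your inductive set consists of Pauli strings generating $\mathfrak{su}(2^n)$, and the only product-universality check is the trivial one for $\{i\sigma^x,i\sigma^y\}$ on a single qubit.

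What each route buys:
\begin{itemize}
\item Your route avoids both the nested-commutator computation and the product universality of the $(N-2)$-qubit Majorana-type strings, which the text asserts but does not verify. It also shows, more generally, that extending any universal Pauli core by such a pair on a fresh qubit preserves universality.
\item The paper's route exhibits explicitly that the nonlocal generators in $\mathcal{B}_I'$ are Lie-generated by the nearest-neighbour ones. That is exactly the replacement $\mathcal{B}_I'\to\mathcal{B}_{II}'$ the section is meant to illustrate.
\end{itemize}

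Two minor points in your write-up:
\begin{itemize}
\item Product universality and Lie universality are not equivalent notions in general; for one qubit they merely both hold.
\item In your fallback argument, bracketing with elements $iP\otimes I$ only produces $iQ\otimes\sigma^y_{n+1}$ for $Q\in P_n^*$, never $Q=I$. The identity factor must come from brackets such as $[iQ\otimes\sigma^x_{n+1},iQ\otimes\sigma^y_{n+1}]$, as your next sentence in effect does.
\end{itemize}
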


Theorem~\ref{su2N-unchanged} provides a way to construct the generating
set for the full \(\mathfrak{su}(2^N)\) using \(\mathcal{B}_{II}'\) in
Eq.~\eqref{b2} instead of \(\mathcal{B}_I'\) in Eq.~\eqref{b1}. The
formal proof can be found in Appendix~\ref{proof-of-invariance-thm}. 
In quantum circuit optimization, 
nearest-neighbor interactions in
\(\mathcal{B}_{II}'\) are motivated by the connectivity constraints of
many quantum hardware platforms
\cite{linke2017experimental}. 
Since many platforms only permit direct operations
between adjacent qubits, circuits with non-adjacent interactions may
require SWAP insertions or additional gate decompositions, which can
increase circuit depth and error rates
\cite{zulehner2018efficient}. 
The actual compilation cost depends on the
hardware connectivity graph, native gate set, and compiler strategy.

\subsubsection{Generating set of Pauli strings with cardinality increasing}

Unlike the strict invariance in the 
\(L=L'\) case, for \(L<L'\), we propose two quantitative
diagnostic indicators to evaluate
the structural change induced by generator addition or replacement,
thereby addressing a practical limitation of Condition~(2)
in Lemma~\ref{lemma2}. Specifically, Lemma~\ref{lemma2} provides two
necessary and sufficient conditions for preserving
\begin{align}\label{lemma3-eq}
    \langle \mathcal{P} \rangle_{\mathrm{Lie},\mathbb{R}}
=
\langle \mathcal{P}\cup \mathcal{Q} \rangle_{\mathrm{Lie},\mathbb{R}}
\end{align}
with arbitrary finite generating sets
\(\mathcal{P}=\{iH_1,\cdots,iH_p\}\) and
\(\mathcal{Q}=\{iH_{p+1},\cdots,iH_q\}\) in the exploration of quantum
simulation.

Following the convention stated in Sec.~\ref{sec:preliminary}, the
generators below are regarded as elements of \(\mathfrak{su}(2^N)\).
We emphasize that the exact conclusion
in Eq.(\ref{lemma3-eq})
requires both conditions in Lemma~\ref{lemma2}: the condition involving
\(\mathcal{P}^{(2)}\) and the central-projection rank condition. The
quantities introduced below are therefore used as diagnostic indicators,
not as replacements for the full exact criterion.

For the exact criterion in Lemma~\ref{lemma2}, let
\(\mathcal{C}=Z((\mathcal{P}\cup\mathcal{Q})')\) be the center of the
commutant of \(\mathcal{P}\cup\mathcal{Q}\). Denote
\(|\mathcal{P}|=p\) and \(|\mathcal{P}\cup\mathcal{Q}|=q\).
Condition~(2) in Lemma~\ref{lemma2} states that the central projections of
\(\mathcal{P}\) and \(\mathcal{P}\cup\mathcal{Q}\) onto \(\mathcal{C}\)
are of the same rank. Assume that \(\mathcal{C}\) is a \(k\)-dimensional
linear space with an orthonormal basis
\(\{\rho_1,\rho_2,\cdots,\rho_k\}\)
with respect to the Hilbert--Schmidt inner product
\(\langle X,Y\rangle_{\mathrm{HS}}=\operatorname{Tr}(X^\dagger Y)\).
For any operator \(iH_\beta\in\mathcal{P}\cup\mathcal{Q}\), its central
projection can be expressed as
\(\Pi_{\mathcal C}(iH_\beta)
=
a_1\rho_1+a_2\rho_2+\cdots+a_k\rho_k,\)
where
\(a_\alpha=\operatorname{Tr}(\rho_\alpha^\dagger iH_\beta),
1\leq\alpha\leq k .\)
These coefficients form the projection matrix \(\tilde T\) for
\(iH_\beta\in\mathcal{P}\), \(1\leq\beta\leq p\), and the projection
matrix \(T\) for \(iH_\beta\in\mathcal{P}\cup\mathcal{Q}\),
\(1\leq\beta\leq q\).
Here, \(\tilde T\) and \(T\) are central projection coefficient
matrices: each column is the coefficient vector of the central projection
of one generator in the chosen Hilbert--Schmidt orthonormal basis.
The linear spaces formed by these projection results are denoted as
\(\operatorname{span}(\Pi_{\mathcal C}(\mathcal{P}))\) and
\(\operatorname{span}(\Pi_{\mathcal C}(\mathcal{P}\cup\mathcal{Q}))\), where \(\Pi_{\mathcal C}\) denotes the orthogonal projection onto
\(\mathcal C\).
Thus Condition~(2) is equivalent to
\(\operatorname{rank}(\tilde T)=\operatorname{rank}(T).\)

Since each column of \(\tilde T\) or \(T\) is the projection coefficient
vector \([a_1,a_2,\cdots,a_k]^T\) of one generator, we have
\(\operatorname{rank}(\tilde T)
=
\dim(\operatorname{span}(\Pi_{\mathcal C}(\mathcal{P}))),\)
and
\(\operatorname{rank}(T)
=
\dim(\operatorname{span}(\Pi_{\mathcal C}(\mathcal{P}\cup\mathcal{Q}))).\)
Using \(\operatorname{rank}(\tilde T)=\operatorname{rank}(T)\), one has
\(\dim(\operatorname{span}(\Pi_{\mathcal C}(\mathcal{P})))
=
\dim(\operatorname{span}(\Pi_{\mathcal C}(\mathcal{P}\cup\mathcal{Q}))).\)

However, we find Condition~(2) of Lemma~\ref{lemma2} is not well aligned
with the engineering characteristics of quantum DLAs in \(N\) qubits.
First, computing the commutant subspaces
\(\mathcal{P}'
=
\{X\in\operatorname{End}(\mathcal H)\mid [X,A]=0,\ A\in\mathcal P\}\)
requires solving linear constraints in a Hilbert space \(\mathcal H\), so
the size of the commutant matrices grows exponentially with \(N\). Second,
forming the projection matrices \(\tilde T\) and \(T\) involves evaluating
high-dimensional trace terms of the form
\(\operatorname{Tr}(\rho_\alpha^\dagger iH_\beta)\). Most critically,
Lemma~\ref{lemma2} provides a binary criterion. In engineering practice,
exact algebraic equivalence is often too strict.
A candidate operator set \(\mathcal Q\) might violate this rank condition
but still have a large component aligned with the original
central-projection structure of \(\mathcal P\). This motivates the
following diagnostic quantities.

In this paper, we consider the DLAs generated by Pauli strings in a
general quantum system.
For the diagnostic quantity below, we use the center of the original
commutant $\mathcal P'$ as a fixed reference space,
\[
\mathfrak c_{\mathcal P}:=Z(\mathcal P').
\]
This is different from the center
\(Z((\mathcal P\cup\mathcal Q)')\) appearing in the exact criterion of
Lemma~\ref{lemma2}, which depends on the added set \(\mathcal Q\). By
fixing \(\mathfrak c_{\mathcal P}\), different candidate sets
\(\mathcal Q\) can be compared against the same original central
structure of \(\mathcal P\).
Suppose the center \(\mathfrak c_{\mathcal P}\) is a \(k\)-dimensional
space equipped with a Hilbert--Schmidt orthonormal basis
\(\{\rho_1,\rho_2,\dots,\rho_k\}\). Without loss of generality, we align
the basis such that the subspace spanned by the central projections of
\(\mathcal P\) corresponds to the first \(m\) basis vectors:
\(\operatorname{span}(\Pi_{\mathfrak c_{\mathcal P}}(\mathcal P))
=
\operatorname{span}\{\rho_1,\rho_2,\dots,\rho_m\},\)
with
$m\leq k$ .
Let
\(V_{\mathcal P}
:=
\operatorname{span}(\Pi_{\mathfrak c_{\mathcal P}}(\mathcal P)).\)
We denote \(P_{V_{\mathcal P}}\) as the Hilbert--Schmidt orthogonal
projector onto \(V_{\mathcal P}\).

For candidate operator set
\(\mathcal Q=\{iH_{p+1},\cdots,iH_q\}\), its central projection
coefficient matrix is
\[
M_{\mathcal Q}
=
\left[
v_{p+1}\;v_{p+2}\;\cdots\;v_q
\right],
\quad
v_\beta
=
\left(
\operatorname{Tr}(\rho_1^\dagger iH_\beta),
\ldots,
\operatorname{Tr}(\rho_k^\dagger iH_\beta)
\right)^T ,
\]
for \(\beta=p+1,\ldots,q\).
When \(\mathcal Q\) contains a single added generator,
\(M_{\mathcal Q}\) reduces to one central projection coefficient vector.
When \(\mathcal Q\) contains several added generators, it is a coefficient
matrix, not a single combined projection vector.
Define a quantitative index \textit{projection overlap} for Pauli DLAs as
\begin{align}
O(\mathcal{P},\mathcal{Q})
=
\frac{
\|P_{V_{\mathcal P}}M_{\mathcal Q}\|_F^2
}{
\|M_{\mathcal Q}\|_F^2
},
\end{align}
whenever \(M_{\mathcal Q}\neq0\), where \(\|\cdot\|_F\) denotes the
Frobenius norm. 
In the chosen Hilbert--Schmidt orthonormal basis of
\(\mathfrak c_{\mathcal P}\), 
\(P_{V_{\mathcal P}}\) is represented as an
orthogonal projection matrix acting on the columns of \(M_{\mathcal Q}\).
The denominator measures the total central-projection weight of
\(\mathcal Q\), while the numerator measures the component lying in
\(V_{\mathcal P}\). If \(M_{\mathcal Q}=0\), we set
\(O(\mathcal P,\mathcal Q)=0\) by convention.
This index takes values in \([0,1]\) and quantifies the
degree to which the central projection of \(\mathcal Q\) aligns with the
existing central structure of \(\mathcal P\).
When \(O(\mathcal{P},\mathcal{Q})=1\), the central projection coefficient
matrix of \(\mathcal Q\) is entirely contained within the projection span
determined by \(\Pi_{\mathfrak c_{\mathcal P}}(\mathcal P)\).
Thus, \(O(\mathcal P,\mathcal Q)=1\) indicates full alignment with the
central-projection structure of \(\mathcal P\). Exact DLA invariance is
still determined by the two conditions in Lemma~\ref{lemma2}.

The rank-equality criterion proposed in \cite{zimboras2015symmetry}
provides a binary decision: symmetry is either perfectly preserved or
deemed broken. In practical Hamiltonian engineering, such strict algebraic
equivalence is often too restrictive.
When \(O(\mathcal{P},\mathcal{Q})\in(0,1)\), the operator set
\(\mathcal Q\) contains components orthogonal to the original
central-projection span, implying a measurable degree of deviation from
the original central structure.
This suggests a possible change in the DLA structure, at the level of this
algebraic diagnostic.
By introducing the DLA \textit{percentage change} \(D_c\) alongside the
projection overlap, we offer a diagnostic framework for
variational quantum circuit pruning \cite{sim2021adaptive}:
\begin{align}\label{percentage-change}
D_c
=
\frac{
\dim(\langle \mathcal{P}\cup \mathcal{Q} \rangle_{\mathrm{Lie},\mathbb R})
-
\dim(\langle \mathcal{P} \rangle_{\mathrm{Lie},\mathbb R})
}{
\dim(\langle \mathcal{P} \rangle_{\mathrm{Lie},\mathbb R})
}
\times 100\%.
\end{align}

Specifically, a candidate operator set \(\mathcal Q\) with \(D_c\approx0\)
indicates that it introduces little or no dimension growth relative to the
existing circuit \(\mathcal P\), suggesting algebraic redundancy
in the theory of overparametrization \cite{larocca2023theory}. A high
overlap \(O(\mathcal P,\mathcal Q)\approx1\) provides a quantitative
diagnostic of alignment with the original central-projection structure.

The pair of quantities \(O(\mathcal P,\mathcal Q)\) and \(D_c\) should
therefore be understood as heuristic algebraic diagnostics. They do not by
themselves bound reachable-set distance, channel distance, Trotter error,
trainability, or observable error. Any claim of exact simulability should
be made only after both conditions in Lemma~\ref{lemma2} are checked.

\subsection{DLA reduction by generator filtering}

Question~3 asks how to modify a generating set \(\mathcal A\) to a new
set \(\mathcal A'\) such that the resulting DLA realizes a prescribed
target subalgebra. In this subsection, we answer this question through commutator filtering and Lie-algebraic
projection.
To turn Question~3 into a precise Lie-algebraic statement, we
specify the class of target subalgebras considered in this work. Rather
than treating an arbitrary subalgebra
\(\mathfrak s\subseteq\mathfrak g_{\mathcal A}\), we focus on target
ideals obtained by selecting simple ideals from the reductive
decomposition of \(\mathfrak g_{\mathcal A}\). Detailed proofs of the following results in this subsection are given in Appendix~\ref{proof-of-reduction}.

\subsubsection{Commutator filtering}
To understand why commutator filtering can recover a simple component, we
first consider a simple real Lie algebra \(\mathfrak g\) and an element \(F\in\mathfrak g\).
Lemma~\ref{new-lemma-for-reduction} and Proposition~\ref{prop2} provide the answer needed for compact simple
components.
Throughout this subsection, write $\ad_F:\mathfrak g\to\mathfrak g$, $X\mapsto [F,X]$.
We also write $\im(\ad_F)$ and $\ker(\ad_F)$ for its image and kernel, respectively.

\begin{lemma}\label{new-lemma-for-reduction}
Let $\mathfrak g$ be a finite-dimensional simple Lie algebra over $\mathbb R$, and let $F\in\mathfrak g$.
Suppose that $\mathfrak g=\ker(\ad_F)\oplus\im(\ad_F)$ as real vector spaces, and that $\im(\ad_F)\ne0$.
Then
\(\big\langle [F,X]:X\in\mathfrak g\big\rangle_{\Lie,\mathbb R}
    =\mathfrak g.\)
\end{lemma}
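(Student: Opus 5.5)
Write $\mathfrak h:=\big\langle [F,X]:X\in\mathfrak g\big\rangle_{\Lie,\mathbb R}$, which is the Lie subalgebra generated by the subspace $V:=\im(\ad_F)$. The plan is to show that $\mathfrak h$ is a nonzero ideal of $\mathfrak g$. Simplicity of $\mathfrak g$ then forces $\mathfrak h=\mathfrak g$.

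\textbf{Nonzero.} By construction $V\subseteq\mathfrak h$, and $V\neq0$ by hypothesis, so $\mathfrak h\neq0$.

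\textbf{Ideal.} By the decomposition hypothesis, every $Y\in\mathfrak g$ can be written $Y=K+W$ with $K\in\ker(\ad_F)$ and $W\in V$. Since $W\in V\subseteq\mathfrak h$ and $\mathfrak h$ is a subalgebra, $[W,\mathfrak h]\subseteq\mathfrak h$ holds automatically. It therefore suffices to prove
\[
[K,\mathfrak h]\subseteq\mathfrak h \quad\text{for all } K\in\ker(\ad_F).
\]
The key observation is that $\ad_K$ preserves $V$. Indeed, since $[K,F]=0$, the Jacobi identity gives $\ad_K\ad_F=\ad_F\ad_K$. Hence
\[
[K,[F,X]]=[F,[K,X]]\in V \quad\text{for every } X\in\mathfrak g.
\]

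\textbf{Extending from $V$ to $\mathfrak h$.} Next I propagate this invariance from $V$ to all of $\mathfrak h$. Because $\ad_K$ is a derivation, the set
\[
\{Z\in\mathfrak h:\ [K,Z]\in\mathfrak h\}
\]
is a real subspace that is closed under brackets. It is closed under brackets because, for $Z_1,Z_2$ in this set,
\[
[K,[Z_1,Z_2]]=[[K,Z_1],Z_2]+[Z_1,[K,Z_2]]\in\mathfrak h .
\]
This subspace contains $V$, so it contains the Lie algebra generated by $V$, which is all of $\mathfrak h$. Equivalently, one can argue by induction on the depth of nested commutators of elements of $V$. Combining both summands, $[\mathfrak g,\mathfrak h]\subseteq\mathfrak h$, so $\mathfrak h$ is a nonzero ideal. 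Simplicity then gives $\mathfrak h=\mathfrak g$.

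\textbf{Where the difficulty lies.} The only genuinely non-formal step is the commutation $\ad_K\ad_F=\ad_F\ad_K$ for $K$ in the centralizer of $F$, together with the derivation-based induction that carries invariance from $V$ up to the generated subalgebra $\mathfrak h$. I expect the inductive closure to be the point that needs the most careful wording, though neither step is hard.

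The direct-sum hypothesis is used exactly once: it guarantees that $\mathfrak g$ is spanned by $\ker(\ad_F)$ and $V$. In fact only $\mathfrak g=\ker(\ad_F)+\im(\ad_F)$ is needed. In the compact setting this holds automatically, because $\ad_F$ is skew-adjoint with respect to an invariant inner product.
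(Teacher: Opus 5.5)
Your proposal is correct and follows essentially the same route as the paper. Both show that the Lie algebra generated by $\im(\ad_F)$ is a nonzero ideal: the Jacobi identity gives $[\ker(\ad_F),\im(\ad_F)]\subseteq\im(\ad_F)$, the derivation property of $\ad_K$ passes this to the generated subalgebra, and the kernel--image decomposition together with simplicity finishes the argument. Your closure argument via the subspace $\{Z\in\mathfrak h:[K,Z]\in\mathfrak h\}$ simply spells out a step that the paper states in one line.
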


\begin{Proposition}\label{prop2}
Let $\mathfrak g$ be a compact simple real Lie algebra.
If $0\ne F\in\mathfrak g$, then
\(\big\langle [F,X]:X\in\mathfrak g\big\rangle_{\Lie,\mathbb R}
    =\mathfrak g.\)
\end{Proposition}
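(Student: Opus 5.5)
The plan is to reduce Proposition~\ref{prop2} to Lemma~\ref{new-lemma-for-reduction}. This means checking two hypotheses for a nonzero $F$ in a compact simple real Lie algebra $\mathfrak g$: the direct-sum decomposition $\mathfrak g=\ker(\ad_F)\oplus\im(\ad_F)$, and $\im(\ad_F)\neq 0$. Both should follow from the existence of an $\ad$-invariant inner product on $\mathfrak g$, which is available precisely because $\mathfrak g$ is compact.

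\textbf{Step 1: invariant inner product.} Since $\mathfrak g$ is compact semisimple, its Killing form $B$ is negative definite. Set $\langle X,Y\rangle:=-B(X,Y)$. This is a positive definite inner product, and invariance of $B$ gives
\[
\langle[F,X],Y\rangle=-\langle X,[F,Y]\rangle .
\]
Hence $\ad_F$ is skew-symmetric with respect to $\langle\cdot,\cdot\rangle$. Alternatively, in the matrix setting of the paper, where $\mathfrak g\subseteq\mathfrak{su}(2^N)$, one can use $\langle X,Y\rangle=-\operatorname{Tr}(XY)$; this also makes $\ad_F$ skew.

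\textbf{Step 2: orthogonal splitting.} For a skew-symmetric (hence normal) operator on a finite-dimensional real inner product space, $\ker(\ad_F)=\im(\ad_F)^{\perp}$. Indeed, $Y\perp\im(\ad_F)$ holds iff $\langle[F,X],Y\rangle=0$ for all $X$, iff $\langle X,[F,Y]\rangle=0$ for all $X$, iff $[F,Y]=0$. It follows that $\mathfrak g=\ker(\ad_F)\oplus\im(\ad_F)$, and in fact the sum is orthogonal.

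\textbf{Step 3: nontrivial image.} Suppose $\im(\ad_F)=0$. Then $F$ lies in the center of $\mathfrak g$. A simple Lie algebra has trivial center, because the center is an abelian ideal and $\mathfrak g$ is non-abelian. This forces $F=0$, contradicting $F\neq0$. So $\im(\ad_F)\neq 0$.

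With both hypotheses verified, Lemma~\ref{new-lemma-for-reduction} gives $\langle [F,X]:X\in\mathfrak g\rangle_{\Lie,\mathbb R}=\mathfrak g$.

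The only step with real content is Step 1, namely securing the invariant positive definite form. I would cite the standard fact that a compact Lie algebra has a negative semidefinite Killing form, which is definite when the algebra is semisimple. Since simple implies semisimple, this suffices. Everything after that is linear algebra.
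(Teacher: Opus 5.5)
Your proposal is correct and matches the paper's proof step for step. The paper likewise takes a positive definite invariant inner product and uses the skew-adjointness of $\ad_F$ to get $\im(\ad_F)=\ker(\ad_F)^{\perp}$, hence $\mathfrak g=\ker(\ad_F)\oplus\im(\ad_F)$; it then uses $Z(\mathfrak g)=0$ to obtain $\im(\ad_F)\neq 0$ and concludes with Lemma~\ref{new-lemma-for-reduction}, and your explicit choice of $-B$ (or $-\operatorname{Tr}(XY)$ in the matrix setting) simply names the inner product the paper leaves unspecified.
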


Compactness provides a positive definite invariant inner product, with respect to which $\ad_F$ is skew-adjoint. Hence the kernel-image decomposition required in Lemma \ref{new-lemma-for-reduction} holds automatically. This leads to Proposition \ref{prop2}.  Proposition \ref{prop2} provides the component-level capacity, but not yet the finite-generator reduction.

Let
\[
    \mathfrak g_{\mathcal A}
    =
    Z(\mathfrak g_{\mathcal A})
    \oplus
    \bigoplus_{j=1}^{T}\mathfrak g_j
\]
be a reductive dynamical Lie algebra, where $Z(\mathfrak g_{\mathcal A})$ is the center and the $\mathfrak g_j$ are simple ideals.
Let $\mathcal A=\{A_1,\ldots,A_L\}$ generate $\mathfrak g_{\mathcal A}$, and let 
\[\mathfrak s=\bigoplus_{j\in S}\mathfrak g_j\]
be the target semisimple ideal, where
\(S\subseteq\{1,\ldots,T\}\).
Write $\pi_j:\mathfrak g_{\mathcal A}\to\mathfrak g_j$ for the projection.
For $F\in\mathfrak s$, put $F_j=\pi_j(F)$, and define 
\[\mathcal A'_F:=\{[F,A_l]:1\le l\le L\}, \quad \mathfrak g_{\mathcal A'_F}:=\langle\mathcal A'_F\rangle_{\Lie,\mathbb R}.\] 

We now return to DLA reduction. Before proving that the filtered
generators recover the whole target ideal, we first need a containment
result: the filtered algebra should not leave the target \(\mathfrak s\). Since
\(\mathfrak s\) is a direct sum of selected simple ideals in the
reductive decomposition, it is itself an ideal of \(\mathfrak g_{\mathcal A}\).
Therefore, if \(F\in\mathfrak s\), all commutators
\([F,A_l]\) remain inside \(\mathfrak s\). This is Proposition~\ref{prop3}; its proof follows directly from the fact
that \(\mathfrak s\) is an ideal of \(\mathfrak g_{\mathcal A}\).

\begin{Proposition}\label{prop3}
For every $F\in\mathfrak s$, one has $\mathfrak g_{\mathcal A'_F}\subseteq\mathfrak s$.
In particular, if $\dim\mathfrak g_{\mathcal A'_F}=\dim\mathfrak s$, then $\mathfrak g_{\mathcal A'_F}=\mathfrak s$.
\end{Proposition}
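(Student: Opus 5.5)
The plan is to show first that each generator \([F,A_l]\) lies in \(\mathfrak s\), and then to use the fact that a Lie closure of elements of a subalgebra stays inside that subalgebra. The key structural input is that \(\mathfrak s=\bigoplus_{j\in S}\mathfrak g_j\) is an ideal of \(\mathfrak g_{\mathcal A}\). Each \(\mathfrak g_j\) is an ideal in the reductive decomposition, so \([\mathfrak g_{\mathcal A},\mathfrak g_j]\subseteq\mathfrak g_j\). Summing over \(j\in S\) and using bilinearity of the bracket gives \([\mathfrak g_{\mathcal A},\mathfrak s]\subseteq\mathfrak s\).

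Now fix \(F\in\mathfrak s\). Each \(A_l\in\mathcal A\subseteq\mathfrak g_{\mathcal A}\), so \([F,A_l]=-[A_l,F]\in[\mathfrak g_{\mathcal A},\mathfrak s]\subseteq\mathfrak s\). Hence \(\mathcal A'_F\subseteq\mathfrak s\). Since \(\mathfrak s\) is a real subspace closed under brackets (an ideal is in particular a subalgebra), it contains every nested commutator of elements of \(\mathcal A'_F\) and all their real linear combinations. The Lie closure \(\langle\mathcal A'_F\rangle_{\Lie,\mathbb R}\) is the smallest real Lie subalgebra containing \(\mathcal A'_F\), so \(\mathfrak g_{\mathcal A'_F}\subseteq\mathfrak s\). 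If one prefers to avoid the minimality characterization, a short induction on commutator depth gives the same conclusion: generators lie in \(\mathfrak s\), and brackets of elements of \(\mathfrak s\) lie in \(\mathfrak s\).

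For the second claim, \(\mathfrak g_{\mathcal A'_F}\) is a linear subspace of the finite-dimensional real vector space \(\mathfrak s\). A subspace of equal finite dimension must coincide with the ambient space, since a basis of the subspace is then a linearly independent set of maximal size and therefore spans \(\mathfrak s\). This gives \(\mathfrak g_{\mathcal A'_F}=\mathfrak s\) when the dimensions agree.

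There is no real obstacle. The only point needing care is justifying that \(\mathfrak s\) is an ideal, which follows from the ideal property of the simple summands \(\mathfrak g_j\) in the decomposition. The center plays no role, because \(F\) is taken in \(\mathfrak s\) rather than in all of \(\mathfrak g_{\mathcal A}\).
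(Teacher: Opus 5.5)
Your proof is correct and follows the paper's argument: \(\mathfrak s\) is an ideal of \(\mathfrak g_{\mathcal A}\), being a sum of simple ideals, so each \([F,A_l]\) lies in \(\mathfrak s\), and therefore so does the Lie closure. The dimension statement is then the standard fact that a subspace of equal finite dimension is the whole space; you spell out the closure and dimension steps that the paper leaves implicit, but the route is the same.
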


In a direct sum of simple Lie algebras, surjectivity onto each individual
factor is not enough to guarantee that a subalgebra is the full direct
sum. The standard obstruction is a diagonal copy, such as
\(\{(X,X):X\in\mathfrak{su}(2)\}
    \subset
    \mathfrak{su}(2)\oplus\mathfrak{su}(2),\)
which projects surjectively onto both factors but is only one copy of
\(\mathfrak{su}(2)\). Therefore, we need to rule
out such diagonal locking.

\begin{lemma}\label{lemma5}
Let $\mathfrak s_1,\ldots,\mathfrak s_m$ be finite-dimensional simple real Lie algebras, and let $\mathfrak q\le\bigoplus_{i=1}^{m}\mathfrak s_i$.
Assume that every coordinate projection of $\mathfrak q$ is surjective.
Assume also that, whenever $i\ne j$ and $\mathfrak s_i\cong\mathfrak s_j$, the projection of $\mathfrak q$ onto $\mathfrak s_i\oplus\mathfrak s_j$ is surjective.
Then
\[
    \mathfrak q=\bigoplus_{i=1}^{m}\mathfrak s_i.
\]
\end{lemma}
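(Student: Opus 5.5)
This is a Goursat-type argument, and I would prove it by induction on $m$, using the basic fact that in a direct sum of simple Lie algebras every ideal is a sum of some of the factors. The base case $m=1$ is immediate, since surjectivity of the single projection gives $\mathfrak q=\mathfrak s_1$.

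For the inductive step, set $\mathfrak a=\bigoplus_{i<m}\mathfrak s_i$ and write $p:\mathfrak q\to\mathfrak a$ and $\pi_m:\mathfrak q\to\mathfrak s_m$ for the two projections. The image $p(\mathfrak q)\le\mathfrak a$ inherits both hypotheses. Its coordinate projections are those of $\mathfrak q$. Its projections onto pairs $\mathfrak s_i\oplus\mathfrak s_j$ with $i,j<m$ are also those of $\mathfrak q$. By induction, $p(\mathfrak q)=\mathfrak a$.

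Next consider $\mathfrak k_m:=\mathfrak q\cap\mathfrak s_m$, which is the kernel of $p$. It is an ideal of $\mathfrak q$. Since $\pi_m$ is surjective, $\pi_m(\mathfrak k_m)=\mathfrak k_m$ is an ideal of $\pi_m(\mathfrak q)=\mathfrak s_m$. Simplicity then forces $\mathfrak k_m=0$ or $\mathfrak k_m=\mathfrak s_m$. If $\mathfrak k_m=\mathfrak s_m$, then $\mathfrak q\supseteq\mathfrak s_m$, and together with $p(\mathfrak q)=\mathfrak a$ this gives $\mathfrak q=\mathfrak a\oplus\mathfrak s_m$, which finishes the step.

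The main obstacle is ruling out $\mathfrak k_m=0$. In that case $\mathfrak q$ is the graph of a map, i.e. $\mathfrak q=\{(X,\phi(X)):X\in\mathfrak a\}$ with $\phi=\pi_m\circ p^{-1}$ a surjective Lie homomorphism $\mathfrak a\to\mathfrak s_m$. Its kernel is an ideal of $\mathfrak a$ of codimension $\dim\mathfrak s_m>0$. Ideals of a direct sum of simple algebras are sums of factors: for an ideal $\mathfrak I$, the bracket $[\mathfrak s_i,\mathfrak I]$ is either $0$ or $\mathfrak s_i$, and since simple algebras have trivial center this shows $\mathfrak I=\bigoplus_{i\in J}\mathfrak s_i$. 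Hence $\ker\phi=\bigoplus_{i\ne i_0}\mathfrak s_i$ for a single index $i_0<m$. Then $\phi$ restricts to an isomorphism $\psi:\mathfrak s_{i_0}\to\mathfrak s_m$ and vanishes on the other factors.

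It follows that the projection of $\mathfrak q$ onto $\mathfrak s_{i_0}\oplus\mathfrak s_m$ is $\{(X,\psi(X))\}$. This has dimension $\dim\mathfrak s_{i_0}<\dim(\mathfrak s_{i_0}\oplus\mathfrak s_m)$, so it is not surjective. But $\mathfrak s_{i_0}\cong\mathfrak s_m$ with $i_0\ne m$, so the pairwise hypothesis says this projection is surjective — a contradiction. Therefore $\mathfrak k_m=\mathfrak s_m$, which completes the induction.
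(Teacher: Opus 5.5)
Your proof is correct and follows the same route as the paper: induction on $m$, the dichotomy $\mathfrak q\cap\mathfrak s_m\in\{0,\mathfrak s_m\}$, and, in the case $\mathfrak q\cap\mathfrak s_m=0$, viewing $\mathfrak q$ as the graph of a surjective homomorphism $\phi:\mathfrak a\to\mathfrak s_m$ that restricts to an isomorphism on a single factor $\mathfrak s_{i_0}$, which contradicts the pairwise hypothesis. The only difference is how you isolate $\mathfrak s_{i_0}$: you use the ideal structure of $\mathfrak a$, whereas the paper argues that images of distinct commuting summands cannot both be nonzero in the non-abelian $\mathfrak s_m$; in your version, ``exactly one index'' should be justified by the simplicity of $\mathfrak a/\ker\phi\cong\mathfrak s_m$, since the codimension count alone does not give it.
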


\begin{Proposition}\label{prop4}
Let $F\in\mathfrak s$.
Assume that
\(\big\langle [F_j,\pi_j(A_l)]:1\le l\le L\big\rangle_{\Lie,\mathbb R}
    =\mathfrak g_j\)
for every $j\in S$.
Assume moreover that, for every distinct $j,k\in S$ with $\mathfrak g_j\cong\mathfrak g_k$, one has
\[
    \big\langle
    \big([F_j,\pi_j(A_l)],[F_k,\pi_k(A_l)]\big):1\le l\le L
    \big\rangle_{\Lie,\mathbb R}
    =
    \mathfrak g_j\oplus\mathfrak g_k.
\]
Then $\mathfrak g_{\mathcal A'_F}=\mathfrak s$.
\end{Proposition}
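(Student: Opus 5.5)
The plan is to reduce Proposition~\ref{prop4} to Lemma~\ref{lemma5}, applied with $\mathfrak q=\mathfrak g_{\mathcal A'_F}$ and the family of simple algebras $\{\mathfrak g_j\}_{j\in S}$. Proposition~\ref{prop3} already gives $\mathfrak g_{\mathcal A'_F}\subseteq\mathfrak s=\bigoplus_{j\in S}\mathfrak g_j$. Hence $\mathfrak q$ is a subalgebra of this direct sum, and it only remains to verify the two projection hypotheses of Lemma~\ref{lemma5}.

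\textbf{Step 1: compute the projections of the generators.} Since $\mathfrak g_{\mathcal A}=Z(\mathfrak g_{\mathcal A})\oplus\bigoplus_j\mathfrak g_j$ is a direct sum of ideals, each projection $\pi_j$ is a Lie algebra homomorphism, and brackets between different summands vanish. For $F\in\mathfrak s$ we write $F=\sum_{j\in S}F_j$. Expanding $A_l$ into its components, we get
\[
[F,A_l]=\sum_{j\in S}[F_j,\pi_j(A_l)],
\]
because $F_j$ commutes with the center and with every $\mathfrak g_k$ for $k\neq j$. Consequently $\pi_j([F,A_l])=[F_j,\pi_j(A_l)]$ for each $j\in S$. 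Likewise, for distinct $j,k\in S$, the projection $\pi_j\oplus\pi_k$ onto $\mathfrak g_j\oplus\mathfrak g_k$ sends $[F,A_l]$ to $\big([F_j,\pi_j(A_l)],[F_k,\pi_k(A_l)]\big)$.

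\textbf{Step 2: projections commute with Lie closure.} For any Lie homomorphism $\phi$ and any set $\mathcal X$, we have $\phi(\langle\mathcal X\rangle_{\Lie,\mathbb R})=\langle\phi(\mathcal X)\rangle_{\Lie,\mathbb R}$. Indeed, $\phi$ is linear and maps nested commutators to nested commutators, and the image of a subalgebra is a subalgebra. Applying this with $\phi=\pi_j$ gives
\[
\pi_j(\mathfrak q)=\big\langle[F_j,\pi_j(A_l)]\big\rangle_{\Lie,\mathbb R}=\mathfrak g_j
\]
by the first hypothesis. Applying it with $\phi=\pi_j\oplus\pi_k$, for isomorphic pairs, shows that the projection of $\mathfrak q$ onto $\mathfrak g_j\oplus\mathfrak g_k$ is the Lie closure of the displayed pairs, which equals $\mathfrak g_j\oplus\mathfrak g_k$ by the second hypothesis.

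\textbf{Step 3: conclude.} Both hypotheses of Lemma~\ref{lemma5} now hold, so $\mathfrak q=\bigoplus_{j\in S}\mathfrak g_j=\mathfrak s$.

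The only point requiring care is Step~2. We must make sure that projection of the Lie closure equals the Lie closure of the projections, and not merely that one contains the other. This is routine: the inclusion $\supseteq$ holds because the image of a subalgebra is a subalgebra containing the projected generators, and $\subseteq$ holds because $\phi$ maps each nested commutator of generators to the corresponding nested commutator of their images. The substantive work lies in Lemma~\ref{lemma5}, which we take as given.
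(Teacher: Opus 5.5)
Your proposal is correct and follows the paper's proof essentially step for step: containment via Proposition~\ref{prop3}, the identity $\pi_j([F,A_l])=[F_j,\pi_j(A_l)]$, surjectivity of the single and pairwise projections from the two hypotheses, and then Lemma~\ref{lemma5}. Your Step~2 spells out that a Lie homomorphism maps the Lie closure of a set onto the Lie closure of its image, which the paper uses without comment.
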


For a given candidate \(F\), Proposition~\ref{prop4} provides Lie-closure checks on each
selected simple component and on each pair of isomorphic components,
thereby certifying whether the filtered generators
recover the target ideal \(\mathfrak s\). 

\subsubsection{A Lie algebra retraction}
The map $X\mapsto [F,X]$ is a derivation, not a Lie algebra homomorphism.
Indeed,
\(\ad_F([U,V])=[\ad_F(U),V]+[U,\ad_F(V)],\)
whereas, in general, $\ad_F([U,V])\ne [\ad_F(U),\ad_F(V)]$.
This suggests that Lie generation is not automatically maintained when passing to the first-order filter. A Lie algebra retraction could serve as a good alternative approach.

\begin{Proposition}\label{prop5}
Let $\mathfrak g_{\mathcal A}=\langle A_1,\ldots,A_L\rangle_{\Lie,\mathbb R}$ be a finite-dimensional real Lie algebra, and let $\mathfrak s$ be a Lie subalgebra.
Suppose that there is a Lie algebra homomorphism $\rho:\mathfrak g_{\mathcal A}\to\mathfrak s$ whose restriction to $\mathfrak s$ is the identity.
Then
\(\big\langle \rho(A_l):1\le l\le L\big\rangle_{\Lie,\mathbb R}
    =\mathfrak s.\)
\end{Proposition}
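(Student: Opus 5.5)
The plan is to prove the two inclusions separately. Throughout, $\rho$ is used only as a homomorphism (for ``$\subseteq$'') and as a retraction (for ``$\supseteq$''). Write
\[
\mathfrak h:=\big\langle \rho(A_l):1\le l\le L\big\rangle_{\Lie,\mathbb R}.
\]

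\textbf{Step 1 ($\mathfrak h\subseteq\mathfrak s$).} Each $\rho(A_l)$ lies in $\mathfrak s$ by definition of $\rho$. Since $\mathfrak s$ is a Lie subalgebra, it is closed under real linear combinations and brackets. Hence the Lie closure $\mathfrak h$ of the elements $\rho(A_l)$ is contained in $\mathfrak s$.

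\textbf{Step 2 ($\rho(\mathfrak g_{\mathcal A})=\mathfrak h$).} This is the key point: a Lie algebra homomorphism maps the Lie closure of a set onto the Lie closure of its image.
\begin{itemize}
\item By definition, $\mathfrak g_{\mathcal A}$ is the real span of the generators $A_l$ together with all nested commutators $[A_{l_1},[A_{l_2},\ldots,A_{l_r}]\ldots]$.
\item Linearity and bracket preservation give
\[
\rho\big([A_{l_1},[\ldots,A_{l_r}]\ldots]\big)=[\rho(A_{l_1}),[\ldots,\rho(A_{l_r})]\ldots].
\]
So $\rho$ maps this spanning set onto the corresponding spanning set of $\mathfrak h$.
\item Taking real spans yields $\rho(\mathfrak g_{\mathcal A})=\mathfrak h$.
\end{itemize}
If one prefers to avoid the explicit description by nested commutators, an equivalent argument works:
\begin{itemize}
\item $\rho^{-1}(\mathfrak h)$ is a subalgebra of $\mathfrak g_{\mathcal A}$ containing every $A_l$, so it equals $\mathfrak g_{\mathcal A}$. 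This gives $\rho(\mathfrak g_{\mathcal A})\subseteq\mathfrak h$.
\item Conversely, $\rho(\mathfrak g_{\mathcal A})$ is a subalgebra (the image of a homomorphism) containing every $\rho(A_l)$, so it contains $\mathfrak h$.
\end{itemize}

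\textbf{Step 3 ($\mathfrak s\subseteq\mathfrak h$).} Here we use the retraction property. Since $\mathfrak s\subseteq\mathfrak g_{\mathcal A}$ and $\rho|_{\mathfrak s}=\mathrm{id}$, every $X\in\mathfrak s$ satisfies
\[
X=\rho(X)\in\rho(\mathfrak g_{\mathcal A})=\mathfrak h.
\]
Combining this with Step 1 gives $\mathfrak h=\mathfrak s$.

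There is no serious obstacle. The only point needing care is Step 2, namely that homomorphisms commute with taking Lie closures. It is cleanest to handle this via the preimage/image subalgebra argument rather than by induction on commutator depth. Finite-dimensionality is not needed beyond ensuring these objects are well defined.
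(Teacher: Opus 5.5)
Your proof is correct and follows the paper's argument step for step. You get $\langle\rho(A_l)\rangle_{\Lie,\mathbb R}\subseteq\mathfrak s$ because the $\rho(A_l)$ lie in the subalgebra $\mathfrak s$, then $\rho(\mathfrak g_{\mathcal A})=\langle\rho(A_l)\rangle_{\Lie,\mathbb R}$ from the homomorphism property, and finally $\mathfrak s\subseteq\rho(\mathfrak g_{\mathcal A})$ from $\rho|_{\mathfrak s}=\mathrm{id}$; your preimage/image argument for Step 2 justifies a step the paper only asserts.
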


We apply this to the reductive situation.
Let
\[
    \mathfrak g_{\mathcal A}
    =
    Z(\mathfrak g_{\mathcal A})
    \oplus
    \bigoplus_{j=1}^{T}\mathfrak g_j
\]
be a reductive real Lie algebra, where $Z(\mathfrak g_{\mathcal A})$ is the center and the $\mathfrak g_j$ are simple ideals.
Fix $S\subseteq\{1,\ldots,T\}$, a subspace $\mathfrak z_\mathfrak s$ of $Z(\mathfrak g_{\mathcal A})$, and a vector-space complement $Z(\mathfrak g_{\mathcal A})=\mathfrak z_\mathfrak s\oplus\mathfrak z_\mathfrak k$.
Put
\[
    \mathfrak s
    =
    \mathfrak z_\mathfrak s\oplus\bigoplus_{j\in S}\mathfrak g_j,
    \quad
    \mathfrak k
    =
    \mathfrak z_\mathfrak k\oplus\bigoplus_{j\notin S}\mathfrak g_j.
\]
Then $\mathfrak g_{\mathcal A}=\mathfrak s\oplus\mathfrak k$.
Let $\pi_{\mathfrak s}:\mathfrak g_{\mathcal A}\to\mathfrak s$ be the projection along $\mathfrak k$.

Proposition \ref{prop5} shows that if a Lie algebra retraction exists, DLA reduction follows automatically. The next question is whether such a retraction naturally exists in the reductive decomposition.

\begin{theorem}\label{thm-subalgebra-modify}
If $A_1,\ldots,A_L$ generate $\mathfrak g_{\mathcal A}$, then $\pi_{\mathfrak s}$ is a Lie algebra homomorphism, restricts to the identity on $\mathfrak s$, and
\(\big\langle \pi_{\mathfrak s}(A_l):1\le l\le L\big\rangle_{\Lie,\mathbb R}
    =\mathfrak s.\)
\end{theorem}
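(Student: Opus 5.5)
The plan is to verify the two hypotheses of Proposition~\ref{prop5} for the map $\rho=\pi_{\mathfrak s}$ and then invoke that proposition directly. Throughout I use only the decomposition
\[
    \mathfrak g_{\mathcal A}=\mathfrak s\oplus\mathfrak k,
    \quad
    \mathfrak s=\mathfrak z_{\mathfrak s}\oplus\bigoplus_{j\in S}\mathfrak g_j,
    \quad
    \mathfrak k=\mathfrak z_{\mathfrak k}\oplus\bigoplus_{j\notin S}\mathfrak g_j ,
\]
together with the facts that $Z(\mathfrak g_{\mathcal A})$ is central and the $\mathfrak g_j$ are ideals.

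\textbf{Step 1: $\mathfrak s$ and $\mathfrak k$ are commuting ideals.} First I show $[\mathfrak s,\mathfrak k]=0$. Pairs involving a central summand vanish trivially. For distinct simple ideals, $[\mathfrak g_j,\mathfrak g_k]\subseteq\mathfrak g_j\cap\mathfrak g_k=0$ when $j\neq k$, since the sum of the $\mathfrak g_j$ is direct. Next, $\mathfrak s$ is a subalgebra, because $[\mathfrak s,\mathfrak s]\subseteq\bigoplus_{j\in S}\mathfrak g_j\subseteq\mathfrak s$; the same argument shows $\mathfrak k$ is a subalgebra. Combining these, $[\mathfrak g_{\mathcal A},\mathfrak s]=[\mathfrak s,\mathfrak s]+[\mathfrak k,\mathfrak s]\subseteq\mathfrak s$, so both $\mathfrak s$ and $\mathfrak k$ are ideals. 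I note that the choice of complement $\mathfrak z_{\mathfrak k}$ is irrelevant here, because central elements commute with everything.

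\textbf{Step 2: $\pi_{\mathfrak s}$ is a Lie algebra homomorphism.} Write $X=X_s+X_k$ and $Y=Y_s+Y_k$ according to the decomposition. By Step 1 the cross terms vanish, so
\[
    [X,Y]=[X_s,Y_s]+[X_k,Y_k],
\]
where the first term lies in $\mathfrak s$ and the second in $\mathfrak k$. Hence $\pi_{\mathfrak s}([X,Y])=[X_s,Y_s]=[\pi_{\mathfrak s}X,\pi_{\mathfrak s}Y]$. Linearity of $\pi_{\mathfrak s}$ is clear, and it restricts to the identity on $\mathfrak s$ by the definition of a projection along $\mathfrak k$.

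\textbf{Step 3: conclusion.} With $\rho=\pi_{\mathfrak s}$, the hypotheses of Proposition~\ref{prop5} now hold, so
\[
    \big\langle \pi_{\mathfrak s}(A_l):1\le l\le L\big\rangle_{\Lie,\mathbb R}=\mathfrak s .
\]
For completeness, I would also sketch why Proposition~\ref{prop5} applies. The image $\rho(\mathfrak g_{\mathcal A})$ equals the Lie closure of the $\rho(A_l)$, because a homomorphism maps nested commutators and linear combinations of generators to the same expressions in their images. Surjectivity onto $\mathfrak s$ then follows from $\rho|_{\mathfrak s}=\mathrm{id}$.

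The only point needing care is Step 1, specifically confirming that mixing in an arbitrary central complement does not break the ideal property. This is immediate from centrality, so I expect no real obstacle.
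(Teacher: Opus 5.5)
Your proof is correct and follows the paper's argument: show $[\mathfrak s,\mathfrak k]=0$, compute $\pi_{\mathfrak s}([X,Y])=[X_{\mathfrak s},Y_{\mathfrak s}]=[\pi_{\mathfrak s}(X),\pi_{\mathfrak s}(Y)]$, note that $\pi_{\mathfrak s}$ is the identity on $\mathfrak s$, and invoke Proposition~\ref{prop5}. The only difference is in Step 1. You derive $[\mathfrak s,\mathfrak k]=0$ directly from centrality and from $[\mathfrak g_j,\mathfrak g_k]\subseteq\mathfrak g_j\cap\mathfrak g_k=0$, whereas the paper asserts that $\mathfrak s$ and $\mathfrak k$ are ideals and deduces $[\mathfrak s,\mathfrak k]\subseteq\mathfrak s\cap\mathfrak k=0$; your version simply fills in that asserted step.
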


Theorem~\ref{thm-subalgebra-modify} provides a projection-based reduction with
an algebraic guarantee. This contrasts with commutator filtering: projection is
a Lie algebra homomorphism and therefore preserves Lie generation automatically,
whereas \(F\)-filtering is only a derivation and must be verified using the
conditions in Proposition~\ref{prop4}.

Question~3 asks how one can modify a generating set \(\mathcal A\) to a new set
\(\mathcal A'\) such that the resulting DLA realizes \(\mathfrak s\subseteq\mathfrak g_{\mathcal A}\). The results above
provide a rigorous version of this question in the compact reductive setting.
The reduction result has a direct resource interpretation. 
If the target dynamics is contained in the dynamical Lie subgroup
\(\exp(s)\), then either the projected generating set
\(\{\pi_s(A_l):1\le l\le L\}\) from Theorem~\ref{thm-subalgebra-modify}, or a filtered generating set
satisfying Proposition~\ref{prop4}, realizes this target DLA.

In particular, when
\(\dim\mathfrak s<\dim\mathfrak g_{\mathcal A}\), the effective
dynamical search space is reduced from \(\dim\mathfrak g_{\mathcal A}\)
to \(\dim\mathfrak s\). This provides a Lie-algebraic notion of resource
reduction: the task is implemented within the smaller reachable group, rather than within the full group.

We finish this subsection with two elementary examples. The first shows that \(F\) must be
chosen relative to the finite generating set \(\mathcal A\); the second
shows that pairwise conditions are needed to rule out diagonal locking
between isomorphic simple ideals.

Example 1 (Dependence on the finite generating set):
We illustrate why a nonzero component \(F_j\in\mathfrak g_j\) is not sufficient for finite-generator filtering. Let
\(A_1=iX_1+2iX_2,\) and 
\(A_2=2iY_1+iY_2,\)
so that
\(\mathfrak g_{\mathcal A}
    =
    \mathfrak g_1\oplus\mathfrak g_2,\)
with \(\mathfrak g_1,\mathfrak g_2\cong\mathfrak{su}(2).\)   
Take the target ideal
\(\mathfrak s=\mathfrak g_1.\)
If one chooses
\(F=iX_1,\)
then, up to nonzero real scalar factors,
\([F,A_1]=0,\) and 
    $[F,A_2]\propto iZ_1$.
Hence
\(\mathfrak g_{\mathcal A'_F}
    =
    \operatorname{span}_{\mathbb R}\{iZ_1\},\)
which is strictly smaller than \(\mathfrak g_1\cong\mathfrak{su}(2)\).
In contrast, choosing
\(F=iZ_1\)
gives
\( [F,A_1]\propto iY_1,\) and 
$[F,A_2]\propto iX_1$.
Thus the filtered generators recover
\(\langle iX_1,iY_1\rangle_{\mathrm{Lie},\mathbb R}
    =
    \operatorname{span}_{\mathbb R}\{iX_1,iY_1,iZ_1\}
    =
    \mathfrak g_1.\)

Example 2 (Avoiding diagonal locking between isomorphic ideals):
Let
\(\mathfrak g_{\mathcal A}
    =
    \mathfrak{su}(2)_1\oplus\mathfrak{su}(2)_2\)
be generated by
\[
    A_1=iX_1+iX_2,
    \quad
    A_2=iY_1+iY_2,
    \quad
    A_3=iZ_1+2iZ_2.
\]
The set \(\{A_1,A_2,A_3\}\) generates the full direct sum, since
\([A_1,A_2]\) gives a direction proportional to \(iZ_1+iZ_2\), which can
be combined with \(A_3=iZ_1+2iZ_2\) to separate the two \(Z\)-directions;
the remaining \(X\)- and \(Y\)-directions are then obtained by
commutators.

Now take the full target
\(\mathfrak s
    =
    \mathfrak{su}(2)_1\oplus\mathfrak{su}(2)_2.\)
If one chooses the symmetric filtering operator
\(F=iZ_1+iZ_2,\)
then
\[
    [F,A_1]\propto iY_1+iY_2,
    \quad
    [F,A_2]\propto iX_1+iX_2,
    \quad
    [F,A_3]=0.
\]
Therefore the filtered algebra is only the diagonal copy
\(\operatorname{span}_{\mathbb R}
    \{iX_1+iX_2,\ iY_1+iY_2,\ iZ_1+iZ_2\}
    \cong
    \mathfrak{su}(2),\)
rather than
\(\mathfrak{su}(2)_1\oplus\mathfrak{su}(2)_2\).

In contrast, choosing
\(F=iZ_1+2iZ_2\)
breaks the diagonal locking. Indeed,
\([F,A_1]\propto iY_1+2iY_2,
    [F,A_2]\propto iX_1+2iX_2.\)
Their commutator yields a direction proportional to
\(iZ_1+4iZ_2.\)
Further commutators produce, for example,
\(iY_1+8iY_2,\)
which can be linearly combined with \(iY_1+2iY_2\) to separate
\(iY_1\) and \(iY_2\). The remaining \(iX_j\) and \(iZ_j\) directions are
then recovered by commutators. Consequently,
\(\mathfrak g_{\mathcal A'_F}
    =
    \mathfrak{su}(2)_1\oplus\mathfrak{su}(2)_2
    =
    \mathfrak s.\)

The two examples illustrate how the candidate \(F\) can be checked in
practice. In general, the computational cost of this verification depends on
the representation of the generators, the method used to obtain the reductive
decomposition, and the dimension of the Lie algebra generated during the
calculation. Thus, the verification should be understood as a structural check whose cost depends on the specific Hamiltonian generators and on the Lie algebra generated by them.

\section{Applications and Numerical Studies}

\subsection{Computational methodology}

The numerical Lie-algebra computations in this section are performed in
finite-dimensional matrix representations. We use the standard Pauli matrices,
and DLA generators are represented in the skew-Hermitian convention \(iP\).
Matrix Lie algebras are treated as real vector spaces, with numerical linear
independence tested using the real Hilbert--Schmidt inner product.

Lie closures are computed iteratively from the real linear span of the input
generators by adjoining commutators that increase the numerical rank. The
iteration terminates when no additional independent commutator is found.

For a matrix set \(\mathcal S=\{S_1,\ldots,S_r\}\), the commutant
\(\mathcal S'\) is obtained from the homogeneous system
\([X,S_j]=0\) for \(j=1,\ldots,r\). The center \(Z(\mathcal S')\) is then
obtained by restricting a general linear combination of commutant basis
elements to commute with the full commutant basis. Matrix ranks are determined
from the singular values. Further implementation details and numerical
tolerances are given in Appendix~\ref{app:numerical-procedures}.

\subsection{DLA composition for Hilbert-space allocation}

Consider a \(2\)-qubit quantum system with a dynamical generating set
\(\mathcal{A}=\{A_1,A_2\}\) describing a dipole coupling with a tilted
magnetic field 
\(A_1=i\left(2\sigma^z_1\sigma^z_2-\sigma^x_1\sigma^x_2-\sigma^y_1\sigma^y_2\right)\)
and
\(A_2=i\left(\sigma^x_1-\sigma^y_1+\sigma^x_2-\sigma^y_2\right)\) \cite{zimboras2015symmetry}.
Define another generating set
\[
\mathcal{B}=\{B_1\}
=
\left\{
i\left(\sigma^x_1\sigma^x_2+\sigma^y_1\sigma^y_2+\sigma^z_1\sigma^z_2\right)
\right\}
\]
corresponding to a Heisenberg-type interaction. Here,
\(\sigma^x_j,\sigma^y_j,\sigma^z_j\) denote Pauli operators acting on
qubit \(j\), and all \(A_1,A_2,B_1\) are traceless anti-Hermitian
generators acting on the same two-qubit system Hilbert space.

\begin{figure}[!t]
\centering
\subfloat[\textbf{In DLA}]{
    \includegraphics[width=0.6\textwidth]{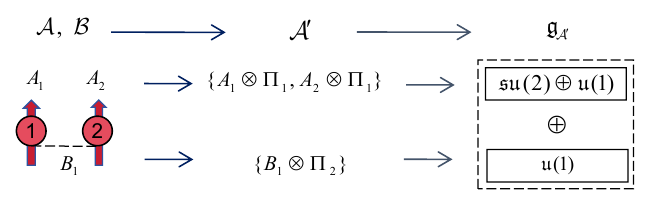}
    \label{fig:res-classical}
}
\hfill
\subfloat[\textbf{In circuit design}]{
   \begin{quantikz}[row sep=0.13cm, column sep=0.13cm]
    \lstick{$q_1$ } & \gate[2]{U_{A_1}} & \gate[2]{U_{A_2}} & \qw & \qw \\
    \lstick{$q_2$} &                   &                   & \qw & \qw \\
   \lstick{$q_3$ } & \qw               & \qw               & \gate[2]{U_{B_1}} & \qw \\
    \lstick{$q_4$} & \qw               & \qw               &                   & \qw
\end{quantikz} \quad $\rightarrow$ \quad 
    \begin{quantikz}[row sep=0.3cm, column sep=0.3cm]
    \lstick{$q_{\text{aux}}$}    & \gate{H} & \ctrl[open]{1} & \ctrl[open]{1}  & \ctrl{1} & \gate{H} &  \\
    \lstick{$q_1$}    &                       &\gate[2, style={fill=blue!20}]{U_{A_1} \otimes \Pi_1} & \gate[2, style={fill=blue!20}]{U_{A_2} \otimes \Pi_1} & \gate[2, style={fill=red!20}]{U_{B_1} \otimes \Pi_2} & & \\
    \lstick{$q_2$}    &                       &                           &                         &                         & &
    \end{quantikz}
    \label{fig:res-quantum}
}
\caption{Illustration of DLA composition.
(a) DLA composition using the dipole--field and
Heisenberg-type generator sets. (b) An example of evolution using circuits. If \(A_1\) is anti-Hermitian, then
\(e^{(A_1\otimes\Pi_1)t}
=
e^{A_1t}\otimes\Pi_1+I\otimes\Pi_2 .\)
This construction uses \(2+\lceil\log_2 2\rceil=3\) qubits for
Hilbert-space allocation, compared with $4$ qubits in naive implementation. 
 }
\label{example1}
\end{figure}

Let
\(\Pi_1=\ket{0}\bra{0},
\Pi_2=\ket{1}\bra{1},\)
and \(\chi=\Pi_1-\Pi_2.\)
According to Theorem~\ref{composition1}, a new generating set is
\begin{align}\label{example1-a'}
\mathcal{A}'
=
\left\{
A_1\otimes\Pi_1,\,
A_2\otimes\Pi_1,\,
B_1\otimes\Pi_2
\right\}.
\end{align}
For any \(A\in\mathcal{A}\) and \(B\in\mathcal{B}\),
\([A\otimes\Pi_1,B\otimes\Pi_2]
=
AB\otimes(\Pi_1\Pi_2)-BA\otimes(\Pi_2\Pi_1)
=
0,\)
showing that the subalgebras generated by the two projector-labeled
branches commute trivially. The subset
\(\{A_1\otimes\Pi_1,A_2\otimes\Pi_1\}\) generates a subalgebra isomorphic
to \(\mathfrak g_{\mathcal A}\), since its commutators preserve the
structure of \(\mathfrak g_{\mathcal A}\),
\[
[A_i\otimes\Pi_1,A_j\otimes\Pi_1]
=
[A_i,A_j]\otimes\Pi_1
\in
\mathfrak g_{\mathcal A}\otimes\Pi_1.
\]
Similarly, \(\{B_1\otimes\Pi_2\}\) generates a subalgebra isomorphic to
\(\mathfrak g_{\mathcal B}\), represented on the second auxiliary branch
as \(\mathfrak g_{\mathcal B}\otimes\Pi_2\). Therefore, the new generating
set \(\mathcal A'\) yields
\[
\mathfrak g_{\mathcal A'}
\cong
\mathfrak g_{\mathcal A}\oplus\mathfrak g_{\mathcal B}
\cong
\left(\mathfrak{su}(2)\oplus\mathfrak{u}(1)\right)\oplus\mathfrak{u}(1).
\]
This example uses one auxiliary label qubit, since the two branches are
encoded by the rank-one projectors \(\Pi_1\) and \(\Pi_2\). This
qubit-count statement concerns Hilbert-space allocation and register
count. The displayed construction uses
\(|\mathcal A'|=|\mathcal A|+|\mathcal B|=3\)
controlled generators.
An illustration of this application is given in Fig.~\ref{example1}.

\subsection{DLA invariance diagnostics for the central-spin model}

We illustrate the two diagnostics using a central-spin-type example embedded
in a three-qubit Hilbert space \cite{merkulov2002electron}. The purpose of this example is to compare the projection overlap
\(O(\mathcal P,\mathcal Q)\) with the DLA percentage change \(D_c\).
These quantities are used as algebraic diagnostics and are not intended
to provide an operational simulation error bound.

Define \(\mathcal P=\{iH_1,iH_2\}\), where
\(iH_1=i\{0.5\sigma_1^z+
J_2(\sigma_1^x\sigma_2^x+\sigma_1^y\sigma_2^y)\}\) and
\(iH_2=i\sigma_1^z\), with \(J_2\in\mathbb R\setminus\{0\}\).
All operators are understood as three-qubit operators, with identity
tensor factors on unused qubits suppressed.
We now compute the fixed reference center for this \(\mathcal P\). Let
\(\Lambda_+=|00\rangle\langle00|\otimes I_2\),
\(\Lambda_0=(|01\rangle\langle01|+|10\rangle\langle10|)\otimes I_2\),
and \(\Lambda_-=|11\rangle\langle11|\otimes I_2\). The generators in
\(\mathcal P\) preserve the block decomposition determined by
\(\Lambda_+,\Lambda_0,\Lambda_-\), and act irreducibly on the
two-dimensional \(\Lambda_0\) block, up to the identity factor on the
third qubit. Hence the traceless fixed reference center is
\(\mathfrak c_{\mathcal P}:=Z(\mathcal P')\cap\mathfrak{su}(8)
=\operatorname{span}_{\mathbb R}\{\rho_1,\rho_2\}\), where we choose
\(\rho_1=\frac{i}{2}(\Lambda_+-\Lambda_-)
=\frac{i}{4}(\sigma_1^z+\sigma_2^z)\) and
\(\rho_2=\frac{i}{2\sqrt2}(\Lambda_+-\Lambda_0+\Lambda_-)
=\frac{i}{\sqrt8}\sigma_1^z\sigma_2^z\). Both
\(\rho_1\) and \(\rho_2\) belong to \(\mathfrak{su}(8)\), and
\(\operatorname{Tr}(\rho_\alpha^\dagger\rho_\beta)=\delta_{\alpha\beta}\).

Let \(\Pi_{\mathcal P}\) be the Hilbert--Schmidt orthogonal projection
onto \(\mathfrak c_{\mathcal P}\). For an operator \(X\), write
\(\Pi_{\mathcal P}(X)=v_{\mathcal P,1}(X)\rho_1+
v_{\mathcal P,2}(X)\rho_2\), and denote
\(v_{\mathcal P}(X)=(v_{\mathcal P,1}(X),v_{\mathcal P,2}(X))^T\).
With the basis above, \(V_{\mathcal P}:=\operatorname{span}(\Pi_{\mathcal P}(\mathcal P))
=\operatorname{span}\{\rho_1\}\). The flip-flop term
\(\sigma_1^x\sigma_2^x+\sigma_1^y\sigma_2^y\) is Hilbert--Schmidt
orthogonal to \(\rho_1\) and \(\rho_2\), while
\(\Pi_{\mathcal P}(i\sigma_1^z)=2\rho_1\). Hence
\(\Pi_{\mathcal P}(iH_1)=\rho_1\) and
\(\Pi_{\mathcal P}(iH_2)=2\rho_1\), equivalently
\(v_{\mathcal P}(iH_1)=(1,0)^T\) and
\(v_{\mathcal P}(iH_2)=(2,0)^T\). Therefore, the
fixed-reference projection coefficient matrix of \(\mathcal P\), whose
columns are the coefficient vectors of the projected generators, is
\[
\tilde T
=
\begin{bmatrix}
1 & 2\\
0 & 0
\end{bmatrix},
\quad
\operatorname{rank}(\tilde T)=1.
\]

We first choose an added generator aligned with the original projection
span. Let \(\mathcal Q_1=\{iH_3\}\), where
\(iH_3=i\{0.25\sigma_1^z+
J_3(\sigma_1^x\sigma_2^x+\sigma_1^y\sigma_2^y)\}\), with
\(J_3\in\mathbb R\). Since the flip-flop term has zero projection onto
\(\mathfrak c_{\mathcal P}\), we have
\(\Pi_{\mathcal P}(iH_3)=0.5\rho_1\). Hence
\(O(\mathcal P,\mathcal Q_1)=1\). Moreover, since
\(iH_3=(J_3/J_2)iH_1+(0.25-0.5J_3/J_2)iH_2\), the added generator lies
in the linear span of \(\mathcal P\), and the Lie closure is unchanged.
Thus \(D_c=0\) for \(\mathcal Q_1\).

Next introduce the concrete traceless perturbation direction
\(R=\rho_2\), namely
\(R=\frac{i}{2\sqrt2}(\Lambda_+-\Lambda_0+\Lambda_-)
=\frac{i}{\sqrt8}\sigma_1^z\sigma_2^z\). This direction is
Hilbert--Schmidt normalized, belongs to \(\mathfrak{su}(8)\), and
satisfies \(v_{\mathcal P}(R)=(0,1)^T\). It is therefore a nontrivial traceless direction in the fixed
reference center.

\begin{figure}[t!]
\centering
\includegraphics[width=8cm]{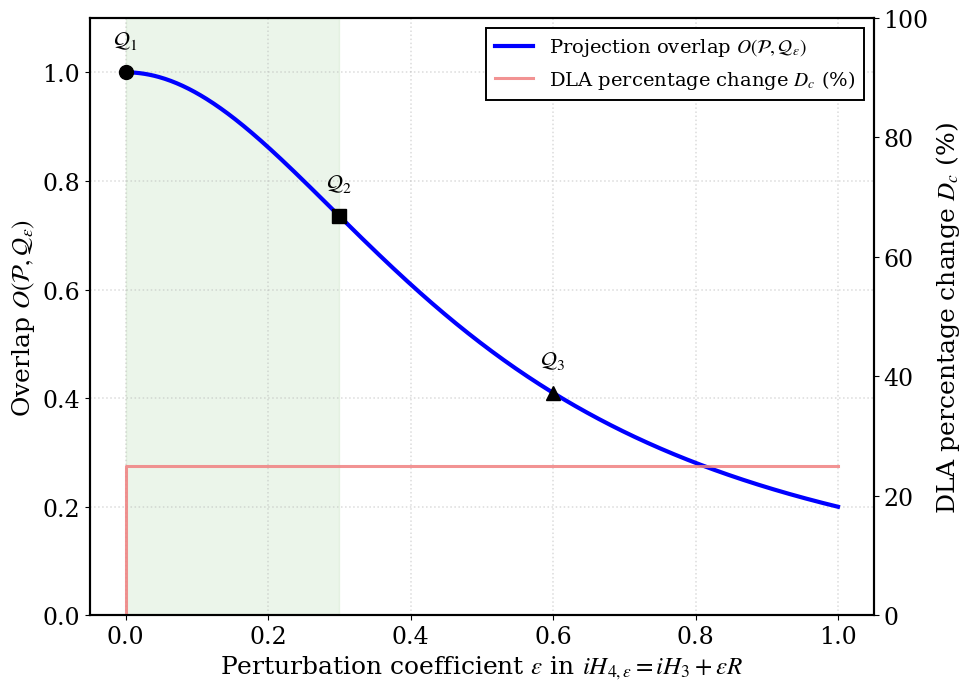}
\caption{DLA invariance diagnostics for the central-spin model. The
horizontal axis is the perturbation coefficient \(\epsilon\) in
\(\mathcal Q_\epsilon=\{iH_{4,\epsilon}\}\), where
\(iH_{4,\epsilon}=iH_3+\epsilon R\) and
\(R=i\sigma_1^z\sigma_2^z/\sqrt8\). The blue curve shows
\(O(\mathcal P,\mathcal Q_\epsilon)\), and the red curve shows the DLA
percentage change \(D_c\). The shaded region marks the high-overlap
interval containing the representative point \(\mathcal Q_2\).}
\label{center-spin}
\end{figure}

For \(\epsilon\in[0,1]\), define
\(iH_{4,\epsilon}=iH_3+\epsilon R\) and
\(\mathcal Q_\epsilon=\{iH_{4,\epsilon}\}\). Then
\(v_{\mathcal P}(iH_{4,\epsilon})=(0.5,\epsilon)^T\). In
the notation of the preceding subsection, \(M_{\mathcal Q_\epsilon}\) is
this single coefficient vector. Since \(P_{V_{\mathcal P}}\) projects
onto the first coordinate, the projection overlap is
\(O(\mathcal P,\mathcal Q_\epsilon)=0.5^2/(0.5^2+\epsilon^2)\), which is
the blue curve in Fig.~\ref{center-spin}. For \(\epsilon=0\), this family
reduces to the aligned case \(\mathcal Q_1\). For any \(\epsilon>0\), the
added generator has a nonzero component along \(\rho_2\), and the
fixed-reference rank-equality diagnostic is no longer satisfied:
\[
T_\epsilon
=
\begin{bmatrix}
1 & 2 & 0.5\\
0 & 0 & \epsilon
\end{bmatrix},
\quad
\operatorname{rank}(T_\epsilon)=2.
\]
This illustrates a practical limitation of using the exact invariance
criterion alone. Lemma~\ref{lemma2} determines whether the Lie closure is
exactly preserved, but it does not provide a graded measure of how the
added generator departs from the original projected structure. The overlap
\(O(\mathcal P,\mathcal Q_\epsilon)\) is therefore used as an
algebraic diagnostic of alignment in the fixed reference center, whereas
\(D_c\) records the actual dimension change of the generated Lie algebra.

For the generator sets above,
\(\dim(\langle\mathcal P\rangle_{\mathrm{Lie},\mathbb R})=4\). At
\(\epsilon=0\), \(\mathcal Q_\epsilon=\mathcal Q_1\), so the Lie closure
is unchanged and \(D_c=0\). For every \(\epsilon>0\), the added traceless
direction \(R\) increases the Lie-closure dimension to \(5\), giving
\(D_c=25\%\). Thus the red curve in Fig.~\ref{center-spin} records the
appearance of an additional Lie-algebraic direction, while the blue curve
varies continuously with \(\epsilon\).

The marked points in Fig.~\ref{center-spin} correspond to three
representative values. The point \(\mathcal Q_1\) corresponds to
\(\epsilon=0\), so \(O(\mathcal P,\mathcal Q_1)=1\) and \(D_c=0\). The
point \(\mathcal Q_2\) corresponds to \(\epsilon=0.3\), for which
\[
O(\mathcal P,\mathcal Q_2)
=
\frac{0.5^2}{0.5^2+0.3^2}
\approx0.735.
\]
The point \(\mathcal Q_3\) corresponds to \(\epsilon=0.6\), for which
\[
O(\mathcal P,\mathcal Q_3)
=
\frac{0.5^2}{0.5^2+0.6^2}
\approx0.410.
\]

The shaded region in Fig.~\ref{center-spin} marks an illustrative
high-overlap range. For any nonzero \(\epsilon\), the fixed-reference
projection already contains an additional coefficient direction, whereas
the overlap can remain large when \(\epsilon\) is small. This illustrates
that Lemma~\ref{lemma2} is an exact invariance criterion, while
\(O(\mathcal P,\mathcal Q_\epsilon)\) and \(D_c\) provide complementary
algebraic information: the former captures the remaining fixed-center
alignment, and the latter identifies the resulting change in the generated
Lie-algebra dimension.

\subsection{DLA reduction for trainability}
\label{subsec:dla-score-bp}

We use DLA reduction as an algebraic pruning principle for trainability.
Our motivation is the Lie algebraic theory of barren plateaus, where the
variance scale of sufficiently deep circuits contains the dimension factor
\(1/\dim(\mathfrak g)\)~\cite{ragone2024lie}. Reducing the DLA dimension can
therefore improve this dimension based proxy, but only if the reduction retains
the Hamiltonian components relevant to the task. We achieve this by selecting
the simple DLA components with the largest projection weights of \(iH\).

Consider an \(n\)-qubit system divided into \(B\) blocks of two qubits, so that
\(n=2B\). Let \(\mathcal B_j\) denote the \(j\)-th block, with Pauli operators
\(X_{j,a},Y_{j,a},Z_{j,a}\) for \(a=1,2\). 
The Hamiltonian on block
\(\mathcal B_j\) is
\begin{equation}\label{H_j}
    H_j
    =
    J_j Z_{j,1}Z_{j,2}
    +
    h_{x,j}(X_{j,1}+X_{j,2})
    +
    h_{z,j}(Z_{j,1}+Z_{j,2}).
\end{equation}
A Hamiltonian instance is then defined by
\begin{equation}\label{eq:random-block-hamiltonian}
    H=\sum_{j=1}^{B}w_jH_j .
\end{equation}
For each instance, the positive block weights are sampled as
\(\tilde w_j\sim\operatorname{LogNormal}(0,1)\) and rescaled according to
\(w_j=\tilde w_j/\max_{1\leq k\leq B}\tilde w_k\), so that the largest weight
is equal to one.

The variational loss is
\begin{equation}
    \mathcal L(\theta)
    =
    \operatorname{Tr}\!\left[
    H U(\theta)\rho_0 U^\dagger(\theta)
    \right],
    \label{eq:random-hamiltonian-loss}
\end{equation}
where \(\rho_0\) is the input state and \(U(\theta)\) is generated by a chosen
Hamiltonian generator set. For each block we use the local generator pool
\(\mathcal A_j=
\{iX_{j,1},iY_{j,1},iX_{j,2},iY_{j,2},iY_{j,1}Y_{j,2}\}\).
It generates
\(\langle\mathcal A_j\rangle_{\mathrm{Lie},\mathbb R}
\cong\mathfrak{su}(4)\). With
\(\mathcal A=\bigcup_{j=1}^{B}\mathcal A_j\), different blocks commute, and
hence
\begin{equation}\nonumber
    \mathfrak g_{\mathcal A}
    =
    \bigoplus_{j=1}^{B}\mathfrak g_j,
    \quad
    \mathfrak g_j\cong\mathfrak{su}(4).
\end{equation}
Thus, \(\dim(\mathfrak g_{\mathcal A})=15B\).

We assign a DLA score to each block by projecting the task Hamiltonian onto the
corresponding simple component. Let
\(\pi_j:\mathfrak g_{\mathcal A}\to\mathfrak g_j\) be the \(j\)-th projection.
Using the normalized Hilbert--Schmidt norm
\(\|X\|_{\mathrm{HS}}^2=2^{-n}\operatorname{Tr}(X^\dagger X)\), we define
\begin{equation}
 R_j=\|\pi_j(iH)\|_{\mathrm{HS}}^2 .
\label{eq:dla-relevance-score}
\end{equation}
For the Hamiltonian in Eq.~\eqref{eq:random-block-hamiltonian}, orthogonality of
distinct Pauli strings gives
\(R_j=w_j^2(J_j^2+2h_{x,j}^2+2h_{z,j}^2)\). Thus, \(R_j\) is the normalized
squared Hilbert--Schmidt weight of $H_j$ supported on
\(\mathcal B_j\).

We select \(S\) blocks with the largest DLA scores and define
\(\mathcal I_{\mathrm{DLA}}=
\operatorname{Top}_{S}\{R_j\}_{j=1}^{B}\). The target ideal is
\begin{equation}\nonumber
    \mathfrak s
    =
    \bigoplus_{j\in\mathcal I_{\mathrm{DLA}}}\mathfrak g_j,
    \quad
    \dim(\mathfrak s)=15S .
\end{equation}
This target algebra is obtained by selecting simple ideals from the
reductive decomposition of \(\mathfrak g_{\mathcal A}\).

The reduced generator set is obtained through the projection form of DLA
reduction. Let \(\mathfrak g_{\mathcal A}=\mathfrak s\oplus\mathfrak k\), where
\(\mathfrak s=\bigoplus_{j\in\mathcal I_{\mathrm{DLA}}}\mathfrak g_j\) and
\(\mathfrak k=\bigoplus_{j\notin\mathcal I_{\mathrm{DLA}}}\mathfrak g_j\) are
complementary ideals. Denote by
\(\pi_{\mathfrak s}:\mathfrak g_{\mathcal A}\to\mathfrak s\) the projection
onto \(\mathfrak s\) along \(\mathfrak k\), and define
\(\mathcal A_{\mathrm{red}}
    =
    \{\pi_{\mathfrak s}(A):A\in\mathcal A\}\setminus\{0\}.\)
For the block decomposition above, this reduces to
\(\mathcal A_{\mathrm{red}}=
\bigcup_{j\in\mathcal I_{\mathrm{DLA}}}\mathcal A_j\). Since
\(\pi_{\mathfrak s}\) is a Lie algebra retraction,
Theorem~\ref{thm-subalgebra-modify} gives
\begin{equation}
    \left\langle
    \mathcal A_{\mathrm{red}}
    \right\rangle_{\mathrm{Lie},\mathbb R}
    =
    \mathfrak s.
    \label{eq:reduced-dla-equality}
\end{equation}
Thus, the reduced generators realize exactly the selected ideal
\(\mathfrak s\), preserving the full local algebra
\(\mathfrak{su}(4)\) on each retained block while reducing the DLA dimension
from \(15B\) to \(15S\). 

We compare five ansatz choices. The
full ansatz serves as a global baseline: it uses all block-local generators
together with nearest-block \(ZZ\) couplings, and its dimension factor in
Fig.~5(a) is evaluated using the full-algebra proxy \(1/(4^n-1)\). The block
local ansatz retains all \(B\) components and generates
\(\bigoplus_{j=1}^B\mathfrak{su}(4)\), but contains no couplings between
distinct blocks. We further consider naive spatial pruning, random pruning,
and DLA score pruning, each of which retains \(S\) block components. Naive
spatial pruning retains the first \(S\) blocks, random pruning samples \(S\)
blocks uniformly without replacement from \(\{1,\ldots,B\}\), and DLA score
pruning retains the \(S\) blocks with the largest \(R_j\).

For a block based method \(\mathsf M\), let \(\mathcal I_{\mathsf M}\) denote
its selected block set and define
\(\mathfrak s_{\mathsf M}=
\bigoplus_{j\in\mathcal I_{\mathsf M}}\mathfrak g_j\). Since the block
components are orthogonal under the normalized Hilbert--Schmidt inner product,
we define
\begin{equation}
    C_{\mathsf M}
    =
    \frac{
        \sum_{j\in\mathcal I_{\mathsf M}}R_j
    }{
        \sum_{j=1}^{B}R_j
    }.
    \label{eq:coverage}
\end{equation}
Thus, \(C_{\mathsf M}\) is the fraction of the normalized squared
Hilbert--Schmidt norm of \(iH\) retained by the projection onto
\(\mathfrak s_{\mathsf M}\). For methods that retain \(S\) blocks,
\(\dim(\mathfrak s_{\mathsf M})=15S\). A useful pruning rule should therefore
retain a large \(C_{\mathsf M}\) while keeping \(S\), and hence the reduced DLA dimension, small.
To assess whether this reduction also improves the gradient scale, we compute
the sampled gradient variance
\begin{equation}
    V_{\mathsf M}(B)
    =
    \mathbb E_{\mathrm{inst}}
    \left[
        \operatorname{Var}_{\theta,\mu}
        \left(
            \partial_{\theta_\mu}\mathcal L_{\mathsf M}(\theta)
        \right)
    \right].
    \label{eq:sampled-gradient-variance}
\end{equation}
 A barren plateau is
characterized by a gradient variance that decreases exponentially with the
system size~\cite{ragone2024lie}. We therefore use \(V_{\mathsf M}(B)\) to
track the gradient scale as the number of blocks increases. This quantity is a
numerical diagnostic and does not establish the presence or absence
of a barren plateau.

In the numerical implementation, the local coefficients in
Eq.~\eqref{H_j} are sampled independently as
\(J_j=s_{J,j}u_{J,j}\),
\(h_{x,j}=0.7s_{x,j}u_{x,j}\), and
\(h_{z,j}=0.3s_{z,j}u_{z,j}\), where
\(u_{J,j},u_{x,j},u_{z,j}\) are independent samples from the uniform
distribution on \([0.7,1.3]\), and
\(s_{J,j},s_{x,j},s_{z,j}\in\{-1,1\}\) are independent random signs.
We use \(B\in\{2,3,4,5\}\), \(S=2\), and
\(\rho_0=(|0\rangle\!\langle0|)^{\otimes n}\). Each ansatz contains two
layers, with every listed gate applied once per layer. Each gate has the form
\(\exp(-i\theta P/2)\), with parameters sampled uniformly from
\([0,2\pi)\), and gradients are evaluated using the parameter shift rule.
For each Hamiltonian instance, the gradient variance is estimated from 30
parameter samples and at most 12 uniformly sampled parameter indices per
sample. For random pruning, the value for each Hamiltonian instance is first
averaged over five independently sampled block sets. The reported results are
then averaged over eight Hamiltonian instances, with error bars indicating one
standard deviation across these instances.

\begin{figure}[t]
\centering
\includegraphics[width=13.5cm]{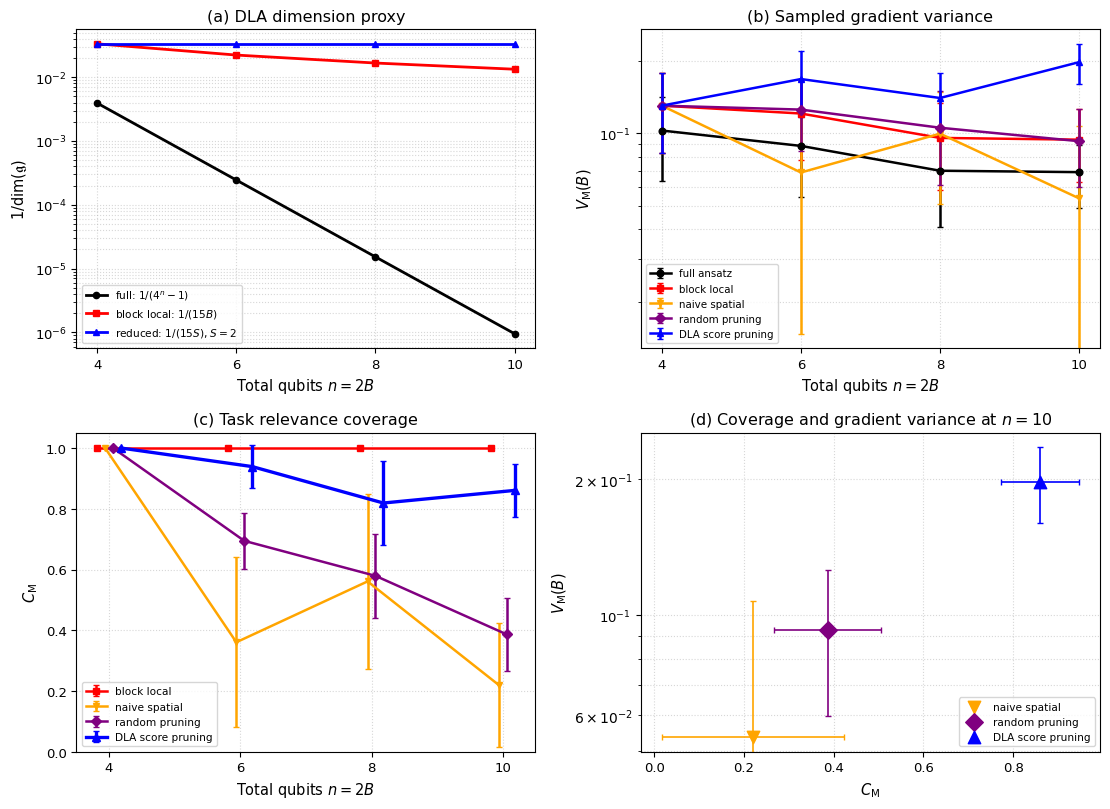}
\caption{
DLA reduction for trainability.
DLA dimension factor \(1/\dim(\mathfrak g)\) for the full, block local,
and \(S\)-block ansatzes.
(b) Sampled gradient variance \(V_{\mathsf M}(B)\) for the loss in
Eq.~\eqref{eq:random-hamiltonian-loss}.
(c) Task relevance coverage \(C_{\mathsf M}\) defined in
Eq.~\eqref{eq:coverage}.
(d) Coverage and gradient variance for the pruning methods at the
largest simulated system size, \(n=10\).
Error bars indicate one standard deviation over Hamiltonian instances.
For random pruning, each instance value is first averaged over five
independently sampled block sets.
}
\label{fig:dla-bp-pruning}
\end{figure}

Fig.~\ref{fig:dla-bp-pruning} summarizes the benchmark. Panel~(a) isolates the algebraic dimension factor \(1/\dim(\mathfrak g)\).
For the full ansatz, this quantity scales as \(1/(4^n-1)\) and therefore
decreases exponentially with the system size. Restricting the dynamics to the
block local algebra changes this dependence to \(1/(15B)\), while retaining
only \(S\) blocks gives \(1/(15S)\).
Since \(S=2\) is fixed in the present benchmark, the reduced DLA dimension
remains constant as the total number of blocks increases. 
Panel~(b) shows the sampled gradient variance \(V_{\mathsf M}(B)\). The black
full ansatz curve decreases to a comparatively small variance scale as the
system size increases, whereas the blue DLA score pruning curve remains at a
larger scale. The orange naive spatial pruning curve exhibits substantial
fluctuations, while the purple random pruning and red block local curves lie
between these cases.
Thus, among the pruning methods considered, DLA score pruning maintains a
comparatively large gradient scale. The finite size results are used as a
numerical trainability diagnostic.

Panel~(c) reports the task relevance coverage \(C_{\mathsf M}\). The red block
local curve remains at \(C_{\mathsf M}=1\), since all block components are
retained. Among the methods restricted to \(S\) blocks, the blue DLA score
pruning curve remains highest, whereas the purple random pruning and orange
naive spatial pruning curves retain smaller fractions and exhibit larger
instance to instance variation.
This follows from the DLA score rule, which
selects the \(S\) components with the largest \(R_j\) and therefore retains the
largest Hamiltonian projection weight.
Panel~(d) combines the two diagnostics at \(n=10\). The blue DLA score pruning
point has both larger coverage and larger sampled gradient variance than the
purple random pruning and orange naive spatial pruning points. This indicates a
favorable balance between reducing the DLA dimension, retaining task relevant
Hamiltonian components, and maintaining the gradient scale in the present
benchmark. The comparison is specific to the Hamiltonian ensemble and circuit
construction considered here.

Overall, Fig.~\ref{fig:dla-bp-pruning} shows that the comparison with random
and naive pruning highlights that dimension reduction alone is insufficient;
the choice of retained DLA components is crucial for preserving both task
relevance and gradient scale. This observation is consistent with the
Lie algebraic perspective that trainability depends not only on the DLA
dimension, but also on how the task Hamiltonian is represented within the
relevant algebra~\cite{ragone2024lie}.

\section{Conclusion and Discussion}

In this work, we have developed a finite-dimensional Lie-algebraic framework
for modifying dynamical generating sets of Hamiltonian-driven quantum systems.
The framework is organized around three operations on DLAs: composition,
invariance, and reduction. For DLA composition, we construct a block-projector
realization of direct sums of component DLAs represented on a common system
Hilbert space. For DLA invariance, we study how changes in a generating set
affect the resulting Lie closure. This includes a nearest-neighbor
Pauli-string construction for \(\mathfrak{su}(2^N)\) and algebraic
diagnostics for the case where additional generators are added. For DLA
reduction, we consider compact reductive DLAs with decompositions into simple
ideals. Projection onto selected simple ideals gives reduced generating sets
whose Lie closures are the corresponding ideal sums.

The examples and numerical studies illustrate how these algebraic operations
can be used in finite-dimensional quantum-control and variational settings.
The direct-sum construction gives a register-level organization of different
dynamical branches. The central-spin example shows how projection overlap and
DLA percentage change capture different aspects of generator modification. The
block-local Hamiltonian example shows how DLA-based selection can be used to
compare reduced ansatz structures. These results also indicate possible connections with quantum engineering
tasks. DLA composition is related to parallel representations of dynamical
branches. DLA invariance provides algebraic criteria for modifying generator
sets under hardware constraints. DLA reduction can guide circuit or ansatz
pruning when the task-relevant dynamics lies in a selected ideal component.

\begin{figure}[htbp]
\centering
\includegraphics[width=9cm]{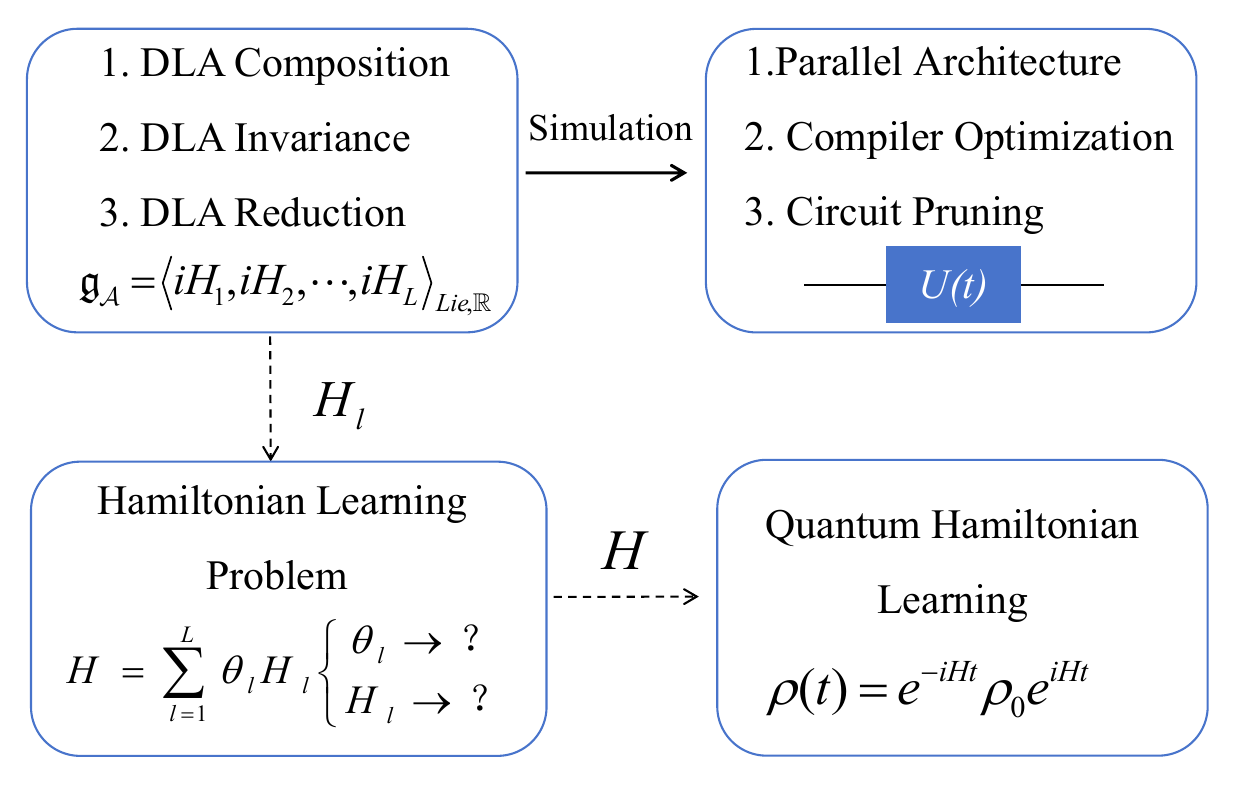}
\caption{Our framework and possible downstream tasks in quantum control, circuit design,
and Hamiltonian learning.}
\label{discussion}
\end{figure}

The schematic perspective in Fig.~\ref{discussion} also points to two
extensions of the present framework. On the simulation side, recent work has
shown that Hamiltonian simulation can benefit from additional structure, such
as compressed shadow-state representations, random-input or average-case
settings, and entanglement properties of the simulated state
\cite{somma2025shadow,zhao2022hamiltonian,zhao2025entanglement}. The DLA
viewpoint developed here may provide complementary algebraic information about
which Hamiltonian directions are available, which generator modifications
preserve the Lie closure, and which ideal components support the relevant
dynamics. Combining this information with implementation-oriented techniques,
such as resource-efficient linear-combination-of-unitaries methods
\cite{chakraborty2024implementing}, is a possible route toward connecting
generator-set design with Hamiltonian synthesis cost and control overhead. On
the learning side, experimental Hamiltonians often take the form
\(H=\sum_{l=1}^{L}\theta_l H_l\), where the interaction terms \(H_l\), the
coefficients \(\theta_l\), or both may be partially unknown. Hamiltonian
learning protocols aim to infer such dynamical models from experimental data
or state trajectories
\cite{wang2017experimental,gu2024practical,heightman2025solving}. In this
setting, DLA structure may serve as an algebraic prior for constraining the
learning problem and for checking whether a learned Hamiltonian model is
compatible with the available control architecture.

\section*{Acknowledgments}
We thank Haoran Zhu for helpful discussion.
This work is supported by the National Natural Science Foundation of China (62471187 and 12371458), Guangdong Basic and Applied Basic Research
Foundation (2026A1515011707).
This work is also supported by China Scholarship Council.

\section*{Data availability statement}
The data are available from the corresponding author upon
reasonable request.

\bibliography{Ref}

\appendix

\section{Proof of Theorem~\ref{composition1} in DLA composition}
\label{app:proof-composition}

\begin{proof}
Denote
\( \mathcal{A}_{\pi_m}
    =
    \left\{
        A_{m,l}\otimes\Pi_m
        \mid l=1,2,\ldots,L_m
    \right\}.\)
We first establish the commutation of different projector sectors.
By definition, orthogonal projectors satisfy
\(\Pi_i\Pi_j=0\) for \(i\neq j\). For any
\(A\in\mathcal A_i\) and \(B\in\mathcal A_j\), where \(i\neq j\), we have
\([A\otimes\Pi_i,B\otimes\Pi_j]
    =
    0,\)
ensuring elements from different \(\mathcal{A}_{\pi_m}\) commute trivially,
and no additional Lie-algebra directions are generated by commutators
between different sectors.

Based on \(\mathcal A_{\pi_m}\), we show that the generated DLA
\[
    \mathfrak g_{\mathcal A_{\pi_m}}
    =
    \langle\mathcal A_{\pi_m}\rangle_{\mathrm{Lie},\mathbb R}
\]
is isomorphic to \(\mathfrak g_{\mathcal A_m}\) for \(1\leq m\leq K\).
Define a linear map
\[
    \phi_m:\mathfrak g_{\mathcal A_m}
    \to
    \mathfrak g_{\mathcal A_{\pi_m}},
    \quad
    \phi_m(X)=X\otimes\Pi_m .
\]
For any \(X,Y\in\mathfrak g_{\mathcal A_m}\), the commutator transforms as
\begin{equation}
    \begin{aligned}
    \phi_m([X,Y])
    &=
    [X,Y]\otimes\Pi_m  \\
    &=
    [X\otimes\Pi_m,Y\otimes\Pi_m]  \\
    &=
    [\phi_m(X),\phi_m(Y)] .
    \end{aligned}
\end{equation}
Here we used the projection property \(\Pi_m^2=\Pi_m\).
Since $\phi_m$ is a linear bijection onto its image generated by $\mathcal{A}_{\pi_m}$, it is a Lie algebra isomorphism, implying 
\(\mathfrak g_{\mathcal A_{\pi_m}}
    \cong
    \mathfrak g_{\mathcal A_m}.\)

Consequently, we can explicitly construct the basis. Let
\(\{U_{m,1},\ldots,U_{m,\delta_m}\}\) be a basis of
\(\mathfrak g_{\mathcal A_m}\). By the isomorphism \(\phi_m\), the set
\begin{equation}
    \mathcal B_m
    =
    \{
        U_{m,p}\otimes\Pi_m
        \mid p=1,\ldots,\delta_m
    \}
\end{equation}
constitutes a basis for \(\mathfrak g_{\mathcal A_{\pi_m}}\).

Finally, we confirm the direct-sum structure. The DLA
\(\mathfrak g_{\mathcal A'}\) is spanned by
\(\bigcup_{m=1}^K\mathcal B_m\), since
\[
    \mathcal A'
    =
    \bigcup_{m=1}^K \mathcal A_{\pi_m},
\]
and each \(\mathcal B_m\) spans the DLA
\(\mathfrak g_{\mathcal A_{\pi_m}}\). By orthogonality of
\(\{\Pi_m\}\), commutators between \(\mathcal B_m\) and
\(\mathcal B_n\) for \(m\neq n\) vanish with
\[
    [U_{m,p}\otimes\Pi_m,U_{n,q}\otimes\Pi_n]
    =
    U_{m,p}U_{n,q}\otimes\Pi_m\Pi_n
    -
    U_{n,q}U_{m,p}\otimes\Pi_n\Pi_m
    =
    0 .
\]
This ensures that the subalgebras
\(\mathfrak g_{\mathcal A_{\pi_m}}\) are mutually commutative and their subspaces are disjoint. 

Since
\([U_s \otimes \Pi_i, U_t \otimes \Pi_j] = [U_s,U_t] \otimes \Pi_i,\)
for all $s,t\in \{1,2,\cdots,\delta\}$ when $i=j$, otherwise \( [U_s \otimes \Pi_i, U_t \otimes \Pi_j] =0\). So in general, if \( \{U_1, \ldots, U_\delta\} \) is a basis for a DLA \( \mathfrak{g} \), we have \( \bigcup_{j=1}^K \{U_1 \otimes \Pi_j, \ldots, U_\delta \otimes \Pi_j\} \) constitutes a basis for a DLA isomorphic to \( \bigoplus_{j=1}^K \mathfrak{g} \). Extending this to multiple DLAs, we have  \(\bigcup_{m=1}^K \mathcal{B}_m\) is a basis for \(\mathfrak{g}_{\mathcal{A}'}\). The mutual commutativity of \(\mathfrak{g}_{\mathcal{A}_{\pi_m}}\) ensures that \(\mathfrak{g}_{\mathcal{A}'}\) is the internal direct sum of these subalgebras.
Therefore,
\[
    \mathfrak g_{\mathcal A'}
    =
    \bigoplus_{m=1}^K
    \mathfrak g_{\mathcal A_{\pi_m}}
    \cong
    \bigoplus_{m=1}^K
    \mathfrak g_{\mathcal A_m}.
\]
\end{proof}

\section{Proofs for DLA invariance}
\subsection{Proof of Proposition~\ref{prop:su2-case}}\label{proof-of-invariance-prop}

\begin{proof}
Since
\[
i\mathcal{P}_{2}^{*}
=
\{i A \otimes B \mid A,B \in\{I,X,Y,Z\}\}\backslash\{iI\otimes I\},
\]
it is enough to show that all non-identity two-qubit Pauli strings are
contained in \(\langle\mathcal{A}_2\rangle_{\mathrm{Lie},\mathbb R}\).
The local generators \(i\sigma^x_1,i\sigma^y_1,i\sigma^x_2,i\sigma^y_2\)
already generate the local Pauli strings
\(i\sigma^\alpha_1\) and \(i\sigma^\beta_2\), with
\(\alpha,\beta\in\{x,y,z\}\). Moreover,
\[
[i\sigma^x_1,i\sigma^y_1\sigma^y_2]
=
-2i\sigma^z_1\sigma^y_2,
\]
\[
[i\sigma^y_1,[i\sigma^x_1,i\sigma^y_1\sigma^y_2]]
=
4i\sigma^x_1\sigma^y_2 .
\]
Together with \(i\sigma^y_1\sigma^y_2\), this gives
\(i\sigma^\alpha_1\sigma^y_2\) for
\(\alpha\in\{x,y,z\}\). Commuting these elements with the local generators
on the second qubit gives
\(i\sigma^\alpha_1\sigma^\beta_2\) for all
\(\alpha,\beta\in\{x,y,z\}\). Hence
\(\langle\mathcal{A}_2\rangle_{\mathrm{Lie},\mathbb R}\) contains all
Pauli strings in \(i\mathcal{P}_{2}^{*}\), and therefore
\[
\langle\mathcal{A}_2\rangle_{\mathrm{Lie},\mathbb R}
=
\mathfrak{su}(4).
\]
\end{proof}

\subsection{Proof of Theorem~\ref{su2N-unchanged}}
\label{proof-of-invariance-thm}

\begin{proof}
Recall that \(\mathcal B_I'\) denotes the non-local Pauli-string
generator set defined in Eq.~\eqref{b1}, while
\(\mathcal B_{II}'\) denotes the nearest-neighbour generator set defined in Eq.~\eqref{b2}.

When $i=4$, we can compute the element of $\mathcal{B}_I'$ in Eq.(\ref{b1}) satisfying
\[i\sigma^x_2 \otimes \sigma^z_3 \otimes \sigma^y_4 \propto [i\sigma^x_2\otimes\sigma^y_3,i\sigma^x_3\otimes\sigma^y_4];\]
 \[i\sigma^y_2 \otimes \sigma^z_3 \otimes \sigma^x_4 \propto [i\sigma^y_2\otimes\sigma^x_3,i\sigma^y_3\otimes\sigma^x_4].\]
By induction method, using the nested Lie bracket in the definition of DLA, we further have
\[i\sigma^x_2 \otimes \prod_{2<j<i}\sigma^z_j \otimes \sigma^y_i \propto [i\sigma^x_2\otimes\sigma^y_3,\cdots,[i\sigma^x_{i-2}\otimes\sigma^y_{i-1},i\sigma^x_{i-1}\otimes\sigma^y_i]];\]
\[i\sigma^y_2 \otimes \prod_{2<j<i}\sigma^z_j \otimes \sigma^x_i \propto [i\sigma^y_2\otimes\sigma^x_3,\cdots,[i\sigma^y_{i-2}\otimes\sigma^x_{i-1},i\sigma^y_{i-1}\otimes\sigma^x_i]].\]
Then we have \(\mathcal{B}_I' \subset \langle \mathcal{B}_{II}' \rangle_{Lie,\mathbb{R}}\) in Eq.(\ref{b1}) and Eq.(\ref{b2}).
Combined with Proposition \ref{prop:su2-case} and Lemma \ref{lemma3}, the modified set \(\mathcal{A}'\) generates the full $\mathfrak{su}(2^N)$.
\end{proof}

\section{Proof in DLA reduction}\label{proof-of-reduction}

\subsection{Proof of Lemma \ref{new-lemma-for-reduction}}

\begin{proof}
It suffices to show that $\langle\im(\ad_F)\rangle_{\Lie,\mathbb R}$ is a non-zero ideal of $\mathfrak g$.
Let $C\in\ker(\ad_F)$ and $X\in\mathfrak g$.
By the Jacobi identity,
\[
    [C,[F,X]]=[[C,F],X]+[F,[C,X]]=[F,[C,X]]\in\im(\ad_F).
\]
Thus, $[\ker(\ad_F),\im(\ad_F)]\subseteq\im(\ad_F)$.
Since $\ad_C$ is a derivation, it follows that
\[
    [\ker(\ad_F),\langle\im(\ad_F)\rangle_{\Lie,\mathbb R}]
    \subseteq
    \langle\im(\ad_F)\rangle_{\Lie,\mathbb R}.
\]
Also,
\[
    [\im(\ad_F),\langle\im(\ad_F)\rangle_{\Lie,\mathbb R}]
    \subseteq
    \langle\im(\ad_F)\rangle_{\Lie,\mathbb R}.
\]
Using $\mathfrak g=\ker(\ad_F)\oplus\im(\ad_F)$, we obtain
\[
    [\mathfrak g,\langle\im(\ad_F)\rangle_{\Lie,\mathbb R}]
    \subseteq
    \langle\im(\ad_F)\rangle_{\Lie,\mathbb R}.
\]
Hence $\langle\im(\ad_F)\rangle_{\Lie,\mathbb R}$ is an ideal of $\mathfrak g$.
It is non-zero, since $\im(\ad_F)\ne0$.
By simplicity, it is equal to $\mathfrak g$.
Since $\im(\ad_F)=[F,\mathfrak g]$, the result follows.
\end{proof}

\subsection{Proof of Proposition \ref{prop2}}

\begin{proof}
Let $\langle\cdot,\cdot\rangle$ be a positive definite invariant inner product on $\mathfrak g$.
For $X,Y\in\mathfrak g$, we have $\langle [F,X],Y\rangle=-\langle X,[F,Y]\rangle$, so $(\ad_F)^*=-\ad_F$.
Hence $\im(\ad_F)=(\ker(\ad_F)^*)^{\perp}=\ker(\ad_F)^{\perp}$.
Thus $\mathfrak g=\ker(\ad_F)\oplus\im(\ad_F)$.
Since $Z(\mathfrak g)=0$, the assumption $F\ne0$ gives $\im(\ad_F)\ne0$.
The result follows from Lemma~\ref{new-lemma-for-reduction}.
\end{proof}

\subsection{Proof of Lemma \ref{lemma5}}

\begin{proof}
We argue by induction on $m$.
The case $m=1$ is clear.
Put $\mathfrak a=\bigoplus_{i=1}^{m-1}\mathfrak s_i$.
By induction, the projection of $\mathfrak q$ onto $\mathfrak a$ is $\mathfrak a$.
Let
\[
    K=\{X\in\mathfrak s_m:(0,X)\in\mathfrak q\}.
\]
Then $K$ is an ideal of $\mathfrak s_m$.
Hence $K=0$ or $K=\mathfrak s_m$.
If $K=\mathfrak s_m$, then $\mathfrak q=\mathfrak a\oplus\mathfrak s_m$.
Suppose $K=0$.
Then $\mathfrak q$ is the graph of a surjective Lie algebra homomorphism $\varphi:\mathfrak a\to\mathfrak s_m$.
The images of distinct simple summands of $\mathfrak a$ commute, so exactly one summand maps non-trivially.
Thus $\varphi|_{\mathfrak s_i}:\mathfrak s_i\to\mathfrak s_m$ is an isomorphism for some $i<m$.
The projection of $\mathfrak q$ onto $\mathfrak s_i\oplus\mathfrak s_m$ is then the graph of this isomorphism, contrary to the hypothesis.
Therefore $K=\mathfrak s_m$, and the result follows.
\end{proof}

\subsection{Proof of Proposition \ref{prop4}}

\begin{proof}
By Proposition~\ref{prop3}, $\mathfrak g_{\mathcal A'_F}\subseteq\mathfrak s$.
For $j\in S$,
\[
    \pi_j([F,A_l])=[F_j,\pi_j(A_l)].
\]
Hence the first assumption says that every coordinate projection of $\mathfrak g_{\mathcal A'_F}$ onto $\mathfrak g_j$, $j\in S$, is surjective.
Similarly, for distinct $j,k\in S$, the projection of $\mathfrak g_{\mathcal A'_F}$ onto $\mathfrak g_j\oplus\mathfrak g_k$ is generated by
\[
    \big([F_j,\pi_j(A_l)],[F_k,\pi_k(A_l)]\big),
    \quad
    1\le l\le L.
\]
Thus the second assumption gives surjectivity on every pair of isomorphic simple factors.
Lemma~\ref{lemma5} now gives $\mathfrak g_{\mathcal A'_F}=\mathfrak s$.
\end{proof}

\subsection{Proof of Proposition \ref{prop5}}

\begin{proof}
Since $\rho(A_l)\in\mathfrak s$ for all $l$,
\[
    \big\langle \rho(A_l):1\le l\le L\big\rangle_{\Lie,\mathbb R}
    \subseteq\mathfrak s.
\]
As $\rho$ is a Lie algebra homomorphism and $\mathfrak g_{\mathcal A}=\langle A_1,\ldots,A_L\rangle_{\Lie,\mathbb R}$, we have
\[
    \rho(\mathfrak g_{\mathcal A})
    =
    \big\langle \rho(A_l):1\le l\le L\big\rangle_{\Lie,\mathbb R}.
\]
The restriction of $\rho$ to $\mathfrak s$ is the identity, so $\mathfrak s\subseteq\rho(\mathfrak g_{\mathcal A})$.
Since $\rho(\mathfrak g_{\mathcal A})\subseteq\mathfrak s$, equality follows.
\end{proof}

\subsection{Proof of Theorem \ref{thm-subalgebra-modify}}

\begin{proof}
Both $\mathfrak s$ and $\mathfrak k$ are ideals of $\mathfrak g_{\mathcal A}$.
Hence $[\mathfrak s,\mathfrak k]\subseteq\mathfrak s\cap\mathfrak k=0$.
For $X=X_\mathfrak s+X_\mathfrak k$ and $Y=Y_\mathfrak s+Y_\mathfrak k$, with $X_\mathfrak s,Y_\mathfrak s\in\mathfrak s$ and $X_\mathfrak k,Y_\mathfrak k\in\mathfrak k$, we have
\[
    \pi_\mathfrak s([X,Y])
    =
    \pi_\mathfrak s([X_\mathfrak s,Y_\mathfrak s]+[X_\mathfrak k,Y_\mathfrak k])
    =
    [X_\mathfrak s,Y_\mathfrak s]
    =
    [\pi_\mathfrak s(X),\pi_\mathfrak s(Y)].
\]
Thus $\pi_\mathfrak s$ is a Lie algebra homomorphism.
It is the identity on $\mathfrak s$, and the claim follows from Proposition~\ref{prop5}.
\end{proof}

\section{Further implementation details}
\label{app:numerical-procedures}

\subsection{Numerical algebraic procedures}

Given a finite generating set
\(\mathcal G=\{G_1,\ldots,G_m\}\), we first construct an orthonormal basis
\(\mathcal B\) of \(\operatorname{span}_{\mathbb R}\mathcal G\).
For each pair \(B_p,B_q\in\mathcal B\), the commutator
\([B_p,B_q]\) is orthogonalized against the current basis.
If the norm of the residual exceeds the numerical threshold, the normalized
residual is appended to \(\mathcal B\), and commutators involving the new basis
element are subsequently evaluated. The procedure terminates when all such
commutators lie in the current span within tolerance. The Lie-algebra dimension
is then \(|\mathcal B|\).
Numerical linear independence and matrix ranks are determined using
orthogonalization and singular-value thresholding.

For a matrix set
\(\mathcal S=\{S_1,\ldots,S_r\}\subset\mathbb C^{d\times d}\), the commutant
basis is obtained from the null space of the stacked matrix whose \(j\)th block
is
\[
S_j^T\otimes I-I\otimes S_j.
\]
If \(\{C_\alpha\}\) is a basis of the commutant, we write
\(Z=\sum_\alpha c_\alpha C_\alpha\) and solve the stacked homogeneous linear
system obtained from \([Z,C_\beta]=0\) for all \(\beta\). The resulting null
space gives the center.
Matrix ranks are determined by singular-value decomposition, with singular
values larger than
\(\max\{\tau_{\mathrm{abs}},\tau_{\mathrm{rel}}\sigma_{\max}\}\)
counted as nonzero.

\subsection{Finite-dimensional checks of the algebraic constructions}
For the composition example, the computed dimensions are
\(\dim\mathfrak g_{\mathcal A}=4\),
\(\dim\mathfrak g_{\mathcal B}=1\), and
\(\dim\mathfrak g_{\mathcal A'}=5\), consistent with
\(\mathfrak g_{\mathcal A'}
\cong
\mathfrak g_{\mathcal A}\oplus\mathfrak g_{\mathcal B}\).

For the central-spin example, the calculation gives
\(\dim\langle\mathcal P\rangle_{\mathrm{Lie},\mathbb R}=4\) and
\(\dim\!\left(Z(\mathcal P')\cap\mathfrak{su}(8)\right)=2\).
The fixed-reference projection matrix has
\(\operatorname{rank}(\tilde T)=1\). For
\(\epsilon=0,0.3,0.6\), the corresponding ranks of \(T_\epsilon\) are
\(1,2,2\), while the Lie-closure dimensions of
\(\langle\mathcal P\cup\mathcal Q_\epsilon\rangle_{\mathrm{Lie},\mathbb R}\)
are \(4,5,5\).

For the DLA reduction construction, a representative instance with
\(B=3\) and \(S=2\) gives
\(\dim\langle\mathcal A\rangle_{\mathrm{Lie},\mathbb R}=45=15B\) and
\(\dim\langle\mathcal A_{\mathrm{red}}\rangle_{\mathrm{Lie},\mathbb R}
=30=15S\).

The two finite-generator filtering examples were evaluated using the same
Lie-closure procedure. In the first example, the filtered dimensions are
\(1\) for \(F=iX_1\) and \(3\) for \(F=iZ_1\). In the second, the symmetric
choice \(F=iZ_1+iZ_2\) gives dimension \(3\), whereas
\(F=iZ_1+2iZ_2\) gives dimension \(6\).

\nocite{*}

\end{document}